\let\ORIlabel\label
\let\ORIrefstepcounter\refstepcounter
\AddToHook{package/hyperref/before}{
    \let\label\ORIlabel
    \let\refstepcounter\ORIrefstepcounter
}
\documentclass[final,onefignum,onetabnum]{siamart220329}

\usepackage[
  left=1in,
  right=1in,
  top=1in,
  bottom=1in
]{geometry}

\usepackage{amsfonts,amsmath,mathrsfs,bm}
\usepackage{amssymb}
\usepackage[notref,notcite]{showkeys}
\usepackage{graphicx}
\usepackage{wrapfig}
\usepackage{subfig}
\usepackage{float}
\usepackage{bbm}
\usepackage{algorithmic, algorithm}
\usepackage{gensymb,color,soul}
\usepackage{tikz}
\usepackage{tikz-3dplot}

\title{Computation of frequency- and time-domain Jacobians in optical tomography with Monte Carlo simulations}

\usepackage{lineno}

\author{
Pauliina Hirvi\footnotemark[2]
\and Jaakko Olkkonen\footnotemark[2] \footnotemark[3]
\and Qianqian Fang\footnotemark[4]
\and Ilkka Nissilä\footnotemark[5]
}

\begin{document}
\maketitle

\renewcommand{\thefootnote}{\fnsymbol{footnote}}
\footnotetext[2]{Aalto University, Department of Mathematics and Systems Analysis, P.O.~Box 11100, FI-00076 Aalto, Finland (pauliina.hirvi@aalto.fi, jaakko.olkkonen@aalto.fi). Pauliina Hirvi and Jaakko Olkkonen contributed equally to this work and are the corresponding authors.}
\footnotetext[3]{Finnish Geospatial Research Institute FGI, National Land Survey of Finland, Department of Remote Sensing and Photogrammetry, P.O.~Box~84, FI-00521 Helsinki, Finland.}
\footnotetext[4]{Northeastern University, Department of Bioengineering, 360 Huntington Avenue, Boston, MA 02115, USA (q.fang@northeastern.edu).}
\footnotetext[5]{Aalto University, Department of Neuroscience and Biomedical Engineering, P.O.~Box 12200, FI-00076 Aalto, Finland (ilkka.nissila@aalto.fi)}

\noindent \begin{abstract}

\noindent \textbf{Significance:} Jacobians, or spatially resolved sensitivity profiles, are central to image reconstruction in model-based optical tomography of biological tissue. Although Monte Carlo (MC) simulations are the gold standard for modeling light transport in turbid media, methodology for frequency- and time-domain Jacobians remains incomplete.\\

\noindent \textbf{Aim:} This work extends MC to directly compute absorption and scattering Jacobians for frequency-domain (amplitude and phase) and time-domain (intensity and mean time-of-flight) measurements and prism-terminated optical fiber detectors.\\

\noindent \textbf{Approach:} Jacobians are derived in the perturbation MC framework and implemented in the high-performance, open-source Monte Carlo eXtreme (MCX) simulator. Results are validated against the diffusion approximation (DA) solved using the finite element method in neonatal head models. MC with split voxels on curved surfaces is extended to Jacobian computation. The detector model is implemented in post-processing and compared with isotropic reception at surface.\\

\noindent\textbf{Results:} MC- and DA-derived Jacobians show excellent agreement only in high-scattering regimes, highlighting the importance of MC for low-scattering domains. The detector model reduces surface sensitivity and marginally increases sensitivity to deeper tissues at short ($< 2$~cm) source--detector separations.\\

\noindent \textbf{Conclusion:} A complete theoretical framework and MC software for computing frequency- and time-domain Jacobians is provided. Realistic detector modeling is encouraged for short-separation channels.

\end{abstract}

\renewcommand{\thefootnote}{\arabic{footnote}}

\begin{keywords}
Diffusion approximation, frequency-domain, optical fibers, Monte Carlo simulations, optical tomography, time-domain
\end{keywords}

\pagestyle{myheadings}
\thispagestyle{plain}
\markboth{P.~HIRVI, J.~OLKKONEN, Q.~FANG, AND I.~NISSILÄ}{COMPUTATION OF FREQUENCY- AND TIME-DOMAIN JACOBIANS}

\section{Introduction}
\label{sec:introduction}

Optical tomography (OT) is a class of non-invasive imaging techniques that use visible or near-infrared light to recover the spatial distribution of optical properties inside a scattering medium, such as biological tissue. The objective of OT is to reconstruct three-dimensional maps of absorption and scattering coefficients (absolute or difference imaging) or their dynamic changes relative to a (pre-stimulus) baseline (functional imaging) from measurements of light at the boundary~\cite{arridge1999optical,nissila2005diffuse}. Optical tomography has been applied in a range of biomedical settings, including functional brain imaging, neonatal cerebral monitoring, optical mammography, muscle imaging, and preclinical imaging in small animals~\cite{wheelock2019high, darne2014small, HIELSCHER200579, grosenick2016review,hoshi2016overview, gibson2005recent}. Compared to conventional medical imaging modalities, OT offers safe, non-ionizing, and non-invasive imaging, portable instrumentation of moderate cost, and some tolerance for subject movement~\cite{nissila2005diffuse, wheelock2019high}.  

Optical recordings can be acquired using continuous-wave (CW), time-domain (TD), or frequency-domain (FD) instruments~\cite{nissila2005diffuse}. TD and FD modalities provide time-resolved information that enables the simultaneous recovery of both absorption and scattering properties of the medium~\cite{liebert2003evaluation, McBride:00, yamamoto2016frequency,kangasniemi2024stochastic}, while multiple parameter combinations can produce identical difference data in CW~\cite{arridge1998nonuniqueness, nissila2004}. A TD instrument inputs short light pulses and measures the time-of-flight (TOF) distribution of the detected photons. In FD-OT, light sources are intensity-modulated at radio frequencies, and the amplitude and phase delay of the detected photon density wave are measured~\cite{nissila2002instrumentation,nissila2005instrumentation}. In CW, the modulation frequency is (near) zero and intensity is measured~\cite{nissila2005instrumentation}. TD instruments are generally the most information-rich for imaging, as they measure the full TOF distribution of detected photons. By contrast, FD instruments measure phase shifts that approximately match the mean TOF. By selectively excluding photons with the shortest TOFs, TD measurements can further enhance sensitivity to deeper regions of the medium. However, TD instruments are less widely adopted than CW or FD systems due to their higher cost, greater technical complexity, larger physical size, and limited maximum photon count rate~\cite{nissila2006comparison,ban2022kernel}.

Image reconstruction in OT requires both an accurate model of light propagation and a method to solve the associated severely ill-posed inverse problem. In functional OT of the human brain or other applications where low-contrast changes are expected, the problem is typically linearized by using the first order Taylor approximation to the relationship between measured difference data and changes in the optical parameters. Linearization is obtained via the Fr\'{e}chet derivative of the forward model and its discrete representation as the Jacobian matrix, also referred to as the sensitivity profiles of the measurements. In practice, incorporating multiple measurement types or modes with complementary sensitivity profiles provides more information for the linearized inverse problem, potentially improving reconstruction accuracy~\cite{doulgerakis2019high,hirvi2023improved}. For instance, the temporal point spread function (TPSF) can be recovered from TD measurements by deconvolution and then transformed into amplitude and phase data via a Fourier transform, enabling reconstruction similar to the FD case~\cite{nissila2006comparison, jiang2020image}. Alternatively, the TPSF can be expressed in terms of mean TOF or higher-order moments, which are independent of any fixed modulation frequency~\cite{wabnitz2020depth, ban2022kernel, hillman2000calibration}. Additional FD data types can also be derived mathematically~\cite{sassaroli2023novel}.

The forward problem in OT is to compute the radiance and resulting measurements, given the system geometry, optical properties, and source--detector configuration. The radiative transfer equation (RTE) provides a physically accurate model of light transport in, e.g., biological tissue, but its analytical solutions are limited, and numerical solution remains challenging. As an integro-differential equation, the RTE can lead to prohibitively large numerical problems unless simplifying approximations are employed~\cite{heino2002estimation}. 

Monte Carlo (MC) simulations provide a stochastic approach to modeling photon propagation and are regarded as the gold standard for numerically solving the RTE~\cite{fang2009monte}. MC methods implement the RTE by sampling random photon trajectories from the light sources according to statistical rules inferred from the RTE. A comprehensive overview of the principles of MC simulations of light transport can be found in~\cite{prahl1989monte, sassaroli2012equivalence}. Briefly, each trajectory consists of a sequence of straight-line segments separated by collision points, where the scattering direction is sampled from the scattering phase function. For enhanced computational efficiency, a variance reduction technique called implicit capture is often implemented, where photon packets are propagated instead of single photons~\cite{prahl1989monte}. If the packet is detected, the final weight of the packet describes the probability that a photon would take that path without being absorbed. Detector measurements are estimated from the weights assigned to the photon packets that reach the detector. Several variants of the MC method have been developed, including the albedo-weight method, albedo-rejection method, microscopic Beer--Lambert law (mBLL) method, and absorption-scattering path length rejection method~\cite{sassaroli2012equivalence}. 

MC methods have been used to simulate Jacobians and perform reconstructions in OT in a range of experimental and methodological settings. Absorption changes have been reconstructed from FD amplitude data in studies of affective touch~\cite{jonsson2018affective,maria2022imaging,hirvi2023improved,shekhar2024maternal}, emotional speech~\cite{shekhar2019hemodynamic,maria2020relationship}, and other task or stimulus-related brain activation paradigms~\cite{kotilahti2006application, nasi2013effect, autti2025simultaneously}. FD amplitude or TD intensity have also been combined with FD phase or TD mean TOF, respectively, to reconstruct absorption changes from measured~\cite{hirvi2023improved} and simulated difference data~\cite{heiskala2007optical, heiskala2009probabilistic}. Beyond functional brain imaging, MC reconstructions have been applied to whole-body small-animal imaging using CW and time-gated measurements~\cite{chen2009time}. Absolute imaging studies have addressed the simultaneous reconstruction of absorption and scattering coefficients in two-dimensional settings using simulated FD data~\cite{yamamoto2016frequency,kangasniemi2024stochastic}.  

Although MC methods can be computationally intensive and subject to random sampling noise, advances in graphics processing unit (GPU) technology have enabled massively parallel simulations of large numbers of photon packets within practical time frames~\cite{fang2009monte}. In this work, we use Monte Carlo eXtreme (MCX), which is a GPU-accelerated, NVIDIA CUDA-based light transport simulator widely adopted by the biophotonics community~\cite{fang2009monte, mcx_space, mcx_github}. MCX enables simulating even trillions of photons on modern graphics hardware to enhance signal-to-noise ratios (SNR) in the Jacobians for complex 3D geometries. The Jacobians are computed based on the perturbation MC (pMC) method in a two-step approach, known as the ``replay'' mode, where the baseline simulation selects the photon packets that are detected and the second simulation replays these paths while collecting sensitivity values~\cite{yao2018direct}. MCX implements the mBLL method~\cite{sassaroli2012equivalence}, thus the presentation that follows is based on the mBLL approach. 

Perturbation-based formulas for CW and FD Jacobians with respect to the absorption coefficient ($\mu_a$) and the scattering coefficient ($\mu_s$) have previously been published~\cite{heiskala2007optical, heiskala2009accurate, yamamoto2016frequency, yao2018direct, hirvi2023effects, amendola2024_part1, kangasniemi2024stochastic}, but methods for computing TD Jacobians remain incomplete to the best of our knowledge. The present work carefully reviews the theory of FD and TD Jacobians starting from the principles of light propagation in mBLL, and implements the FD scattering Jacobians as well as TD absorption and scattering Jacobians in MCX. The MC Jacobians are compared with those obtained using the diffusion approximation (DA) of the RTE, which is solved using the finite element method (FEM) with an in-house solver developed at Aalto University. Other available FEM-based DA solvers include, for example, Toast++~\cite{schweiger2014toast++}, the NIRFAST family~\cite{dehghani2009near, cao2025nirfasterff}, and Redbird~\cite{fang2004, fang2009combined, redbird_github}. Comparisons between DA solved with FEM and RTE solved with MCX have been reported previously; for example, the earlier work~\cite{hirvi2023effects} used both MCX and FEM to simulate FD measurements and the corresponding absorption sensitivity profiles. More recently, analytical expressions for TOF sensitivities in multilayered media have been derived under the DA and validated against MC simulations for time-resolved near-infrared spectroscopy (NIRS)~\cite{garcia2025analytical}. Ref.~\cite{cao2025nirfasterff} introduced a highly parallelized FEM solver for DA and validated it against MCX results. 

In comparison to MC, the DA offers faster but less accurate solutions that are valid only under more restrictive conditions~\cite{nissila2005diffuse}. The DA is derived by linearizing the radiance with respect to scattering direction and approximating the photon current with time-independency~\cite{arridge1999optical, heino2002estimation}. For the approximation to remain physically valid, scattering must dominate absorption, and light propagation should be weakly anisotropic, which is violated at least near sources where the angular distribution of photons becomes highly directional~\cite{schweiger1995finite}. The DA is nevertheless widely accepted as a sufficiently accurate model of light transport in highly scattering regimes, including most biological tissues encountered in medical imaging~\cite{arridge1999optical}. In brain imaging, however, different approaches have been suggested to overcome the challenge of modeling the low-scattering regions filled with cerebrospinal fluid (CSF)~\cite{nissila2005diffuse}, for example, combining radiosity theory to the DA~\cite{dehghaniVoids2000} or selecting alternative optical properties that match the resulting DA output with the MC reference~\cite{lewis2025revisiting}.  

A benefit of the MC method is its relative flexibility in modeling complex measurement systems, such as varying source and detector types. For example, MCX currently supports 18 different source types, which can incorporate various spatial and angular launch distributions. In this work, we focus on modeling light detection in MC. Previously, an imaging probe consisting of optical fibers, fiber bundles, and prism terminals embedded in a black silicone support was developed for the FD instrument at Aalto University~\cite{jonsson2018affective}. By modeling how light exiting the tissue near the detector optode positions and entering the prism can be transmitted to propagate along the detector fiber bundles, we implement a post-processing step ensuring that only photon packets that can realistically be detected contribute to the simulated measurable quantities and Jacobians. The selection procedure is essential for attempts to predict correct absolute levels for the detected optical power. It is also expected to affect the shape of the sensitivity profile, especially at short source--detector separations (SDSs), for example, by rejecting photons that arrive parallel to the head surface.

For improved modeling of the photon exit directions and positions as well as the curvature of the surface under the detector prism, the hybrid boundary mesh and voxelated inner content models are employed in the split-voxel MC mode (SVMC)~\cite{yan2020hybrid}. To the best of our knowledge, this is also the first time the SVMC approach is used for computing Jacobians in the ``replay'' mode.

The remainder of the paper is organized as follows. Section~\ref{sec:Methods} reviews the MC method for light transport and derives the MC Jacobian formulations for CW, FD, and TD measurements. It also presents the DA and adjoint-based approach used for comparison. Section~\ref{sec:implementation} describes the simulation settings, the finite element DA solver, and implementation details of the new Jacobian types. The operating principle and computational implementation of the prism-terminated optical fiber detectors in MC is explained. Section~\ref{sec:results} presents validation results and examines the effect and practical relevance of the angle- and position-limiting detector model in comparison to isotropic reception. Section~\ref{sec:discussion} discusses the findings and their implications, and Section~\ref{sec:conclusion} summarizes our key findings.

\section{Theory and Methods}
\label{sec:Methods}

In this work, the imaged region $\Omega \subseteq \mathbb{R}^3$ is modeled as a bounded Lipschitz domain with a connected complement~\cite{hannukainen2015edge}. For numerical computations, the region is discretised into voxels $\Omega_v$ forming a conforming mesh of $\Omega$. Light sources and detectors (optodes) are represented as two-dimensional, connected, mutually disjoint open subsets of the boundary $\partial \Omega$. Specifically, in this work the optodes are circular patches on the boundary.

The optical properties of the medium are described by four parameters: the absorption coefficient $\mu_a$ [mm$^{-1}$], the scattering coefficient $\mu_s$ [mm$^{-1}$], the scattering anisotropy $g$, and the refractive index $n$, all belonging to the set of essentially bounded positive functions
\begin{align*}
L^\infty_+(\Omega; \mathbb{R}):=  \left\{ v \in L^\infty (\Omega; \mathbb{R}) : \mathop{\operatorname{ess\,inf}} \ v  > 0\right\}. 
\end{align*}
In the voxel-based geometry, $\mu_a$ and $\mu_s$ are represented as piecewise constant functions:
\begin{align*}
\mu_a (\boldsymbol{x}) = \sum_v \mu_{a,v}\chi_{\Omega_v} (\boldsymbol{x}), \quad \mu_s (\boldsymbol{x}) = \sum_v  \mu_{s,v} \chi_{\Omega_v} (\boldsymbol{x}), \quad \boldsymbol{x}\in \overline{\Omega},
\end{align*}
where $\chi_{\Omega_v}$ is the characteristic function of the voxel $v$. In this work, the anisotropy $g$ and the refractive index $n$ are assumed constant in the entire domain (a relatively common assumption in head models). The latter implies a constant speed of light $c$ in the medium.

In the time domain, the RTE reads (cf.~\cite{arridge1999optical}):
\begin{align*}
&\left( \frac{1}{c}\frac{\partial }{\partial t}+\hat{\boldsymbol{s}}\boldsymbol{\cdot}\nabla_{\boldsymbol{x}}+\mu_a(\boldsymbol{x})+\mu_s(\boldsymbol{x}) \right) L(\boldsymbol{x}, \hat{\boldsymbol{s}}, t) \\
& \quad = \mu_s(\boldsymbol{x})\int_{\hat{\boldsymbol{s}}'\in \mathbb{S}^2} \Theta(\hat{\boldsymbol{s}}\boldsymbol{\cdot}\hat{\boldsymbol{s}}')L(\boldsymbol{x}, \hat{\boldsymbol{s}}', t)\,\mathrm{d}S(\hat{\boldsymbol{s}}'), \quad \boldsymbol{x}\in\Omega, \;
\hat{\boldsymbol{s}}\in\mathbb{S}^2, \;
t>0,
\end{align*}
where $L(\boldsymbol{x}, \hat{\boldsymbol{s}}, t)$ denotes the radiance at position $\boldsymbol{x}\in \Omega$, in direction $\hat{\boldsymbol{s}}\in \mathbb{S}^2$, at time $t$. In this work, the scattering phase function $\Theta$, i.e., the conditional probability density of scattering from direction $\hat{\boldsymbol{s}}'$ into direction $\hat{\boldsymbol{s}}$, is assumed to follow the Henyey--Greenstein model~\cite{henyey1941diffuse}:
\begin{align*}
\Theta(\hat{\boldsymbol{s}}\boldsymbol{\cdot}\hat{\boldsymbol{s}}') = \frac{1-g^2}{4\pi (1+g^2-2g \hat{\boldsymbol{s}} \boldsymbol{\cdot}\hat{\boldsymbol{s}}')^{3/2}}.
\end{align*}
This formulation of the RTE assumes no internal light sources. Boundary illumination and reflection are specified by suitable boundary conditions to the equation.

The CW RTE corresponds to the steady-state form of the time-domain equation, obtained by assuming a time-independent radiance. The frequency-domain equivalent of the TD RTE, in turn, is derived by assuming a complex-valued, time-harmonic radiance, $L(\boldsymbol{x}, \hat{\boldsymbol{s}}, t) = \hat{L}(\boldsymbol{x}, \hat{\boldsymbol{s}})e^{-\mathrm{i}\omega t}$, substituting this ansatz into the TD equation, and factoring out the time dependence (cf., e.g.,~\cite{arridge1999optical}). The negative exponential convention is adopted here so that increasing optical path lengths correspond to positive phase shifts in the detected signal.

\subsection{Microscopic Beer--Lambert law for Monte Carlo simulations}
\label{sec:pMC_CW}

MC simulations can estimate various quantities, including the probability of photon detection~\cite{sassaroli2012equivalence}. The detection probability $P$ is defined as the probability that a photon emitted within the source region is captured in the detection region. It can be interpreted as the measured CW intensity, up to a scaling factor; see further discussion in~\ref{sec:physic_quantities}. Since photon trajectories are sampled consistently with the RTE, $P$ corresponds to the detector measurement implied by the equation. In the following, the notation from~\cite{sassaroli2012equivalence} is adapted to express the detection probability formally.

Let $\mathcal{S}$ denote the space of all trajectories from the illumination source to the detector, each composed of a finite number of straight-line segments. A trajectory $\Gamma \in \mathcal{S}$ is represented by a sequence of ordered pairs $(\ell_i, \hat{\boldsymbol{s}}_i)$, $i = 0, 1, \dots, m(\Gamma)$, where $\hat{\boldsymbol{s}}_i \in \mathbb{S}^2$ is the direction of propagation after the $i$-th collision, $\ell_i \geqslant 0$ is the path length along $\hat{\boldsymbol{s}}_i$, and $m(\Gamma)$ is the number of collisions along the trajectory $\Gamma$. The detection probability can then be written as~\cite{sassaroli2012equivalence}:
\begin{align}
P(\mu_a, \mu_s) = \int_{\Gamma \in \mathcal{S}} \exp\left(-\int_{\Gamma}(\mu_a+\mu_s) \, \mathrm{d}s\right) \prod_{i=0}^{m(\Gamma)-1}\mu_s(\ell_i) \, \mathrm{d}\ell_i \Theta(\hat{\boldsymbol{s}}_{i+1} \boldsymbol{\cdot}\hat{\boldsymbol{s}}_{i}) \, \mathrm{d}S(\hat{\boldsymbol{s}}_{i+1}).
\label{eq:detection_probability}
\end{align}
Here, the notation $\mu_s(\ell_i)$ is used to denote the scattering coefficient at the location of the $i$-th collision along the trajectory (notation slightly abused for compactness). Dependencies on the scattering anisotropy $g$ and refractive index $n$ are not written explicitly, as they are assumed constant within the domain; derivatives with respect to these parameters are not considered in this work. Reflection and transmission coefficients arising from refractive index mismatches at the boundary are also omitted from Eq.~\eqref{eq:detection_probability} for conciseness, since they are independent of the absorption and scattering coefficients. Their inclusion is straightforward and does not affect the subsequent derivations or the MC estimates of the absorption and scattering Jacobians (cf.~\cite{amendola2024_part1} and Sec.~\ref{sec:mcx_implementation}).

In the mBLL method, the detection probability is written as~\cite{sassaroli2012equivalence}
\begin{align}
P(\mu_a, \mu_s) = \int_{\Gamma \in \mathcal{S}} W(\Gamma; \mu_a) \, \mathrm{d}\mathcal{P}(\Gamma; \mu_s),
\label{eq:detection_probability_mBLL}
\end{align}
where all absorption dependence is captured by the weight function
\begin{align}
W(\Gamma;\mu_a) = \exp \left(-\int_\Gamma \mu_a \, \mathrm{d}s\right)\, .
\label{eq:detection_weight_mBLL}
\end{align}
The remaining factors from the original expression for $P(\mu_a, \mu_s)$ in Eq.~\eqref{eq:detection_probability} are incorporated into the probability distribution $\mathcal{P}(\cdot; \mu_s)$ from which trajectories are sampled; see~\cite{sassaroli2012equivalence} for a more intuitive description. The sampled trajectories in the mBLL method are therefore independent of the absorption coefficient. In this formulation, the detection probability corresponds to the expectation of the random variable $W(\boldsymbol{\Gamma}; \mu_a)\chi_{\mathcal{S}}(\boldsymbol{\Gamma})$ when $\boldsymbol{\Gamma}$ is distributed according to $\mathcal{P}(\cdot; \mu_s)$.

In a voxel-based geometry, let us assume that $N$ trajectories are sampled (i.e., $N$ photon packets are simulated) i.i.d.\ from the distribution $\mathcal{P}(\cdot; \mu_s)$. The MC estimate of the expected value $P(\mu_a, \mu_s)$ is then the sample mean
\begin{align}
\hat{P}= \frac{1}{N}\sum_p w_p,
\label{eq:MC_probability}
\end{align}
where the summation runs over all photon packets $p$ that reach the detector, and the weight of each packet is defined as $w_p := W(\Gamma_p; \mu_a)$. In the voxelized domain, the weight can be expressed as
\[w_p = \mathrm{exp}\left(-\sum_{v} \mu_{a,v}\ell_{p,v}\right),\]
where the sum is over all voxels, and $\ell_{p,v}$ denotes the total path length of photon packet $p$ within voxel $v$~\cite{heiskala2007optical, leino2019valomc}. 

\subsubsection{Perturbation Monte Carlo for the microscopic Beer–Lambert law}

To derive MC-based expressions for the measurement sensitivities in different imaging modalities (FD and TD), it is convenient to first consider the derivative of the detection probability $P$ with respect to the voxel-wise absorption $\mu_{a,v}$ and scattering $\mu_{s,v}$ coefficients in voxel $v$. The derivative of $P$ with respect to the absorption coefficient can be written as
\begin{align}
\frac{\partial P(\mu_a, \mu_s)}{\partial \mu_{a,v}} = \int_{\Gamma \in\mathcal{S}} \frac{\partial W(\Gamma; \mu_a)}{\partial \mu_{a,v}} \, \mathrm{d}\mathcal{P}(\Gamma; \mu_{s}) = - \int_{\Gamma \in\mathcal{S}} L_v(\Gamma) W(\Gamma; \mu_{a}) \, \mathrm{d}\mathcal{P}(\Gamma; \mu_{s}) ,
\label{eq:probability_absorption}
\end{align}
where $L_v(\Gamma) := \vert \Gamma \cap \Omega_v\vert $ denotes the total path length of trajectory $\Gamma$ within voxel $v$. The corresponding MC estimate for the derivative is
\begin{align} \frac{\partial \hat{P}}{\partial \mu_{a,v}} = -\frac{1}{N}\sum_p \ell_{p,v} w_p,
\label{eq:MC_probability_absorption}
\end{align}
as also reported, e.g., in~\cite{heiskala2009accurate, heiskala2007optical}.

To derive the derivative of the detection probability with respect to voxel-wise scattering coefficients, the perturbation Monte Carlo (pMC) technique is utilized~\cite{yao2018direct, hayakawa2001perturbation, sassaroli1998monte, 9046017, amendola2024_part1}. Consider a perturbed scattering coefficient $\mu_s^\delta := \mu_s + \delta \mu_{s,v} \chi_{\Omega_v}$, where $\delta \mu_{s,v}$ is a constant perturbation applied to the scattering coefficient of voxel $v$. Using Eq.~\eqref{eq:detection_probability}, the perturbed detection probability can be expressed as
\begin{align}
P(\mu_a, \mu_s^\delta) &= \int_{\Gamma \in \mathcal{S}} \exp{\left(-\delta \mu_{s,v} L_v(\Gamma)\right)}\exp{\left(-\int_{\Gamma }(\mu_a+\mu_s) \, \mathrm{d}s\right)} \notag  \\
& \quad \times \prod_{i=0}^{m(\Gamma)-1} \left(\frac{\mu_s^\delta(\ell_i) }{\mu_s(\ell_i)}\right) \mu_{s}(\ell_i) \, \mathrm{d}\ell_i \Theta(\hat{\boldsymbol{s}}_{i+1} \boldsymbol{\cdot} \hat{\boldsymbol{s}}_i) \, \mathrm{d}S(\hat{\boldsymbol{s}}_{i+1}) \notag \\
&= \int_{\Gamma \in \mathcal{S}} \exp{\left(-\delta \mu_{s,v} L_v(\Gamma)\right)} \left(\frac{\mu_{s,v} + \delta \mu_{s,v}}{\mu_{s,v}}\right)^{m(\Gamma; \Omega_v)}W(\Gamma; \mu_a) \, \mathrm{d}\mathcal{P}(\Gamma; \mu_s),
\label{eq:inserted_mus_pert}
\end{align}
where $m(\Gamma; \Omega_v)$ denotes the number of collisions of trajectory $\Gamma$ within voxel $v$. The derivative of $P$ with respect to $\mu_{s,v}$ is then obtained by differentiating with respect to the perturbation and evaluating at zero:
\begin{align}
\frac{\partial P(\mu_a, \mu_s)}{\partial \mu_{s,v}} &= \left. \frac{\partial P(\mu_a, \mu_s^\delta)}{\partial (\delta \mu_{s,v})}\right|_{\delta\mu_{s,v}=0} \notag  \\
&= \int_{\Gamma \in \mathcal{S}} \left(\frac{m(\Gamma; \Omega_v)}{\mu_{s,v}}-L_v(\Gamma) \right) W(\Gamma; \mu_a) \, \mathrm{d}\mathcal{P}(\Gamma; \mu_s).
\label{eq:probability_scattering}
\end{align}
Accordingly, the MC estimate for the scattering derivative is
\begin{align}
\frac{\partial \hat{P}}{\partial \mu_{s,v}} = \frac{1}{N}\sum_p \left(\frac{m_{p,v}}{\mu_{s,v}}-\ell_{p,v}\right)w_p,
\label{eq:MC_probability_scattering}
\end{align}
where $m_{p,v}:=m(\Gamma_p; \Omega_v)$ is the number of collisions of photon packet $p$ in voxel $v$. This expression for the scattering derivative has the same form as previously applied for mBLL in FD~\cite{kangasniemi2024stochastic}, for mBLL in CW~\cite{yao2018direct}, and for albedo-weight in FD~\cite{yamamoto2016frequency}.

\subsubsection{Perturbation Monte Carlo for frequency-domain sensitivities}
\label{sec:pMC_FD}

As noted in~\cite{leino2019valomc}, for an intensity-modulated source with angular frequency $\omega$, the RTE can be interpreted as the CW RTE with a complex absorption coefficient $\mu_a - \mathrm{i}\frac{\omega}{c}$. Accordingly, the CW detection probability formula (Eq.~\ref{eq:detection_probability_mBLL}) can be extended to compute the complex FD intensity:  
\begin{align}
I_{\text{FD}}(\mu_a, \mu_s) &= \int_{\Gamma \in \mathcal{S}} W\left(\Gamma; \mu_a - \mathrm{i}\frac{\omega}{c}\right)\,\mathrm{d}\mathcal{P}(\Gamma; \mu_s)=\int_{\Gamma \in \mathcal{S}} W(\Gamma; \mu_a) e^{\mathrm{i}\omega |\Gamma|/c}\,\mathrm{d}\mathcal{P}(\Gamma; \mu_s) \notag  \\
&= \int_{\Gamma \in \mathcal{S}} W(\Gamma; \mu_a)\cos{\left(\omega \frac{|\Gamma|}{c}\right)}\,\mathrm{d}\mathcal{P}(\Gamma; \mu_s)+ \mathrm{i}\int_{\Gamma \in \mathcal{S}} W(\Gamma; \mu_a) \sin{\left(\omega \frac{|\Gamma|}{c}\right)}\,\mathrm{d}\mathcal{P}(\Gamma; \mu_s) \notag  \\
&= X(\mu_a, \mu_s) + \mathrm{i}Y(\mu_a, \mu_s).
\label{eq:FD_intensity}
\end{align}
The corresponding MC estimate is~\cite{heiskala2009accurate, heiskala2007optical}
\begin{align}
\hat{I}_{\text{FD}} = \frac{1}{N} \sum_{p} w_p  e^{\mathrm{i}\omega t_p} 
= \frac{1}{N} \sum_{p} w_p \cos(\omega t_p) 
+ \frac{\mathrm{i}}{N} \sum_{p} w_p \sin(\omega t_p)
= \hat{X} + \mathrm{i}\hat{Y},
\label{eq:MC_FD_intensity}
\end{align}
where $t_p := |\Gamma_p|/c$ is the TOF of the detected photon packet $p$. For a piecewise constant $c$, the conversion to TOF would occur already during the integration phase. 

Since the FD phase factor $e^{\mathrm{i}\omega |\Gamma|/c}$ is independent of the voxel-wise absorption and scattering coefficients, the respective derivatives for $I_{\text{FD}}$  follow from the same reasoning as in the CW case (Eqs.~\eqref{eq:probability_absorption} and~\eqref{eq:probability_scattering}):
\begin{align}
\frac{\partial I_{\text{FD}}(\mu_a, \mu_s)}{\partial \mu_{a,v}} &= - \int_{\Gamma \in \mathcal{S}} L_v(\Gamma) \, W(\Gamma; \mu_a) \, e^{\mathrm{i}\omega |\Gamma|/c} \, \mathrm{d}\mathcal{P}(\Gamma; \mu_s), 
\label{eq:x_and_y_mua_derivatives}
\\
\frac{\partial I_{\text{FD}}(\mu_a, \mu_s)}{\partial \mu_{s,v}} &= \int_{\Gamma \in \mathcal{S}} 
\Bigg( \frac{m(\Gamma; \Omega_v)}{\mu_{s,v}} - L_v(\Gamma) \Bigg) 
W(\Gamma; \mu_a) \, e^{\mathrm{i}\omega |\Gamma|/c} \, \mathrm{d}\mathcal{P}(\Gamma; \mu_s).
\label{eq:x_and_y_mus_derivatives}
\end{align}
The MC counterparts are given by
\begin{align}
\frac{\partial \hat{I}_{\text{FD}}}{\partial \mu_{a,v}} &= - \frac{1}{N} \sum_{p} \ell_{p,v} \, w_p \, e^{\mathrm{i}\omega t_p} 
= \frac{\partial \hat{X}}{\partial \mu_{a,v}} + \mathrm{i} \frac{\partial \hat{Y}}{\partial \mu_{a,v}}, 
\label{eq:MC_X_and_y_mua_derivatives}
\\
\frac{\partial \hat{I}_{\text{FD}}}{\partial \mu_{s,v}} &= \frac{1}{N} \sum_{p} \Big(\frac{m_{p,v}}{\mu_{s,v}} - \ell_{p,v}\Big) w_p \, e^{\mathrm{i}\omega t_p} 
= \frac{\partial \hat{X}}{\partial \mu_{s,v}} + \mathrm{i} \frac{\partial \hat{Y}}{\partial \mu_{s,v}}.
\label{eq:MC_X_and_y_mus_derivatives}
\end{align}

The FD quantities that are typically measured in practice, amplitude and phase~\cite{nissila2002instrumentation,nissila2005instrumentation}, can be computed from $X$ and $Y$ as~\cite{heiskala2009accurate, heiskala2007optical}
\[A = \sqrt{X^2 + Y^2}, \quad \varphi = \mathrm{atan2}(Y, X),\]
with MC estimates $\hat{A}$ and $\hat{\varphi}$ computed in the same way from $\hat{X}$ and $\hat{Y}$. Here, $\mathrm{atan2}(\cdot, \cdot)$ denotes the four-quadrant inverse tangent function. The derivatives of log-amplitude and phase with respect to any voxel-wise parameter $\mu_v \in \{\mu_{a,v}, \mu_{s,v}\}$ can then be expressed as~\cite{hirvi2023effects}
\begin{align}
\frac{\partial \ln A}{\partial \mu_v} = \frac{1}{A^2} \left( X \frac{\partial X}{\partial \mu_v} + Y \frac{\partial Y}{\partial \mu_v} \right), \quad \frac{\partial \varphi}{\partial \mu_v} = \frac{1}{A^2} \left( X \frac{\partial Y}{\partial \mu_v} - Y \frac{\partial X}{\partial \mu_v} \right),
\label{eq:log-amplitude_and_phase_derivatives}
\end{align}
and the MC estimates are obtained by substituting the theoretical quantities with their MC equivalents. 

The logarithm of the amplitude is often used in image reconstruction to balance changes induced by a given perturbation in amplitude data across a range of SDSs, i.e., to prevent short channels from disproportionately influencing the minimized error function and degrading localization accuracy~\cite{arridge1999optical, nissila2004, nissila2005instrumentation}. Other benefits of the log-scale are discussed in~\ref{sec:physic_quantities}.
 
\subsubsection{Perturbation Monte Carlo for time-domain sensitivities}
\label{sec:pMC_TD}

In the TD modality, ultrashort (picosecond) light pulses are emitted, and the distribution of photon times-of-flight is measured. As $P(\mu_a, \mu_s)$ in Eq.~\ref{eq:detection_probability} represents the probability of detecting a photon at any time, it can be extended to define the temporal point spread function (TPSF) as (cf.~\cite{sassaroli2012equivalence})
\begin{align*}
P(t; \mu_a, \mu_s) = \int_{\Gamma \in \mathcal{S}_t} W(\Gamma; \mu_a) \,\mathrm{d}\mathcal{P}(\Gamma; \mu_s),
\end{align*}
where the integral is taken over the subset of trajectories with TOF $t$, i.e.,
\begin{align*}
\mathcal{S}_t = \left\{\Gamma \in \mathcal{S} : |\Gamma| = ct\right\}.
\end{align*}
Integrating the TPSF over a time interval $\mathcal{T}$ yields the probability of detecting a photon within $\mathcal{T}$. In particular, integrating the TPSF over the entire time domain corresponds to the total detection probability,
\begin{align*} 
\int_0^\infty P(t; \mu_a, \mu_s)\mathrm{d}t=P(\mu_a, \mu_s).
\end{align*} 
Hence, the TPSF can be interpreted as an unnormalized probability density over photon arrival times.

In TD optical tomography, the time-integrated intensity and mean TOF, derived from the TPSF, can be used for image reconstruction of absorption and scattering changes~\cite{heiskala2009significance,mozumder2020evaluation}. The time-integrated intensity is defined analogously to the CW case as
\begin{align*}
I_{\text{TD}}(\mu_a, \mu_s) = P(\mu_a, \mu_s).
\end{align*}
Therefore, the voxel-wise derivatives of the intensity with respect to absorption and scattering coefficients, as well as the corresponding MC estimates, are given by the same expressions as in Eqs.~\eqref{eq:probability_absorption}--\eqref{eq:MC_probability_scattering} for the CW case.

The mean TOF represents the centroid of the TPSF and can be expressed as the ratio of its first and zeroth moments:
\begin{align}
\langle t\rangle(\mu_a, \mu_s)  = \frac{P^{(1)}(\mu_a, \mu_s)}{P(\mu_a,\mu_s)},
\label{eq:mean_TOF}
\end{align}
where
\begin{align}
P^{(1)}(\mu_a, \mu_s) &= \int_0^\infty t \, P(t; \mu_a, \mu_s) \, \mathrm{d}t = \int_{\Gamma \in \mathcal{S}} W(\Gamma; \mu_a) \frac{|\Gamma|}{c} \, \mathrm{d}\mathcal{P}(\Gamma; \mu_s)
\label{eq:moment}
\end{align}
is the first temporal moment of the TPSF. The corresponding MC estimate is
\begin{align}
\hat{P}^{(1)} = \frac{1}{N}\sum_p w_p t_p.
\label{eq:MC_moment}
\end{align}
Hence, an MC estimate for the mean TOF is (cf.~\cite{heiskala2009significance, steinbrink2001determining})
\begin{align*}
\hat{\langle {t}\rangle} = \frac{\hat{P}^{(1)}}{\hat{P}}= \frac{\sum_p w_p t_p}{\sum_p w_p}.
\end{align*}
Analogous to the CW case (Eqs.~\eqref{eq:probability_absorption} and \eqref{eq:probability_scattering}), the derivatives of the first temporal moment are 
\begin{align}
\frac{\partial P^{(1)}(\mu_a, \mu_s)}{\partial \mu_{a,v}} &= -\int_{\Gamma \in \mathcal{S}} L_v(\Gamma) W(\Gamma; \mu_a) \frac{|\Gamma|}{c} \, \mathrm{d}\mathcal{P}(\Gamma; \mu_s),
\label{eq:MC_moment_mua_derivative}
\\
\frac{\partial P^{(1)}(\mu_a, \mu_s)}{\partial \mu_{s,v}}  &= \int_{\Gamma \in \mathcal{S}}\left(\frac{m(\Gamma; \Omega_v)}{\mu_{s,v}}-L_v(\Gamma)\right)W(\Gamma; \mu_a)\frac{|\Gamma|}{c} \, \mathrm{d}\mathcal{P}(\Gamma; \mu_s),
\label{eq:MC_moment_mus_derivative}
\end{align}
and their MC counterparts are given by
\begin{align}
\frac{\partial \hat{P}^{(1)}}{\partial \mu_{a,v}}=-\frac{1}{N}\sum_p \ell_{p,v}w_pt_p, \quad \frac{\partial \hat{P}^{(1)}}{\partial \mu_{s,v}} = \frac{1}{N}\sum_p \left(\frac{m_{p,v}}{\mu_{s,v}}-\ell_{p,v}\right)w_p t_p.
\label{eq:MC_moment_derivative}
\end{align}
Differentiation of the quotient in Eq.~\eqref{eq:mean_TOF} yields the derivative of the mean TOF with respect to any voxel-wise parameter $\mu_v \in \left\{\mu_{a,v}, \mu_{s,v}\right\}$ as
\begin{align*}
\frac{\partial \langle t\rangle (\mu_a, \mu_s)}{\partial \mu_v} = \frac{1}{P(\mu_a, \mu_s)^2}\left(\frac{\partial P^{(1)}(\mu_a, \mu_s)}{\partial \mu_v}P(\mu_a, \mu_s) - P^{(1)}(\mu_a, \mu_s)\frac{\partial P(\mu_a, \mu_s)}{\partial \mu_v}\right),
\end{align*}
and the corresponding MC estimates are obtained by substituting the theoretical quantities with their MC counterparts (Eqs.~\eqref{eq:MC_probability}, \eqref{eq:MC_probability_absorption}, \eqref{eq:MC_probability_scattering}, \eqref{eq:MC_moment} and \eqref{eq:MC_moment_derivative}), resulting in
\begin{align}
\frac{\partial \hat{\langle {t}\rangle} }{\partial \mu_{a,v}} &= -\frac{\sum_p w_p \ell_{p,v} t_p}{\sum_p w_p} +\frac{\sum_p w_p t_p}{\sum_p w_p} \cdot \frac{\sum_p w_p \ell_{p,v} }{\sum_p w_p} \notag \\
&= -\langle \ell_v t \rangle +\hat{\langle {t}\rangle} \langle \ell_v\rangle 
\label{eq:meantof_mua_derivative}
\end{align}
and
\begin{align}
\frac{\partial\hat{\langle {t}\rangle} }{\partial \mu_{s,v}} &=\frac{1}{\mu_{s,v}}\cdot  \frac{\sum_p w_pm_{p,v}t_p}{\sum_p w_p}-\frac{\sum_p w_p \ell_{p,v}t_p}{\sum_p w_p} \notag \\
& \quad -\frac{\sum_p w_p t_p}{\sum_p w_p} \cdot \left(\frac{1}{\mu_{s,v}} \cdot \frac{\sum_p w_p m_{p,v}}{\sum_p w_p}-\frac{\sum_p w_p \ell_{p,v}}{\sum_p w_p}\right) \notag \\
&= \frac{\langle m_v t\rangle }{\mu_{s,v}} - \langle \ell_{v}t\rangle -\hat{\langle {t}\rangle} \left(\frac{\langle m_v\rangle }{\mu_{s,v}}-\langle \ell_v\rangle \right), 
\label{eq:meantof_mus_derivative}
\end{align} 
where the angle brackets denote weighted averages.

\subsubsection{Comparing frequency- and time-domain quantities}
\label{sec:FD_vs_TD}

The relationship between the FD and TD quantities is considered for an intensity-modulated source with a modest angular frequency~$\omega$. If the TOF $t_p$ of a photon packet $p$ satisfies $|\omega (t_p - \langle \hat{t}\rangle)| \ll 1$, the complex exponential can be linearized as
\begin{align*}
e^{\mathrm{i}\omega t_p} \simeq e^{\mathrm{i}\omega \hat{\langle {t}\rangle}}\left(1 + \mathrm{i}\omega (t_p - \hat{\langle {t}\rangle})\right).
\end{align*}
If photon packets with the largest weights satisfy this condition, substituting the linearization into the MC estimate of the FD intensity (Eq.~\ref{eq:MC_FD_intensity}) yields 
\begin{align*}
\hat{I}_{\text{FD}} = \frac{1}{N} \sum_p w_p e^{\mathrm{i}\omega t_p} \simeq \frac{e^{\mathrm{i}\omega \hat{\langle {t}\rangle}} }{N}\sum_p w_p = \hat{I}_{\text{TD}} e^{\mathrm{i}\omega \hat{\langle {t}\rangle}}.
\end{align*}
The approximation is therefore valid when $| \omega (t_p - \langle \hat{t}\rangle )| \ll 1$ for photon packets with significant weight, which is satisfied when the TOF distribution is narrow, TOFs are short, or the modulation frequency (typically $10^8$--$10^9$\,Hz) is sufficiently low. Hence, the amplitude and phase can be approximated as
\begin{align*}
\hat{A} = |\hat{I}_{\text{FD}}| \simeq \hat{I}_{\text{TD}}, \quad \hat{\varphi} = \mathrm{atan2}(\Im{(\hat{I}_{\text{FD}})}, \Re{(\hat{I}_{\text{FD}})}) \simeq \omega  \hat{\langle {t}\rangle} \bmod (-\pi, \pi],
\end{align*}
where $\bmod\, (-\pi, \pi]$ denotes reduction modulo $2\pi$ into the interval $(-\pi, \pi]$. Thus, the FD amplitude is approximately equal to the TD integrated intensity, and the phase corresponds to the mean TOF scaled by the modulation angular frequency~$\omega$. Consequently, the corresponding sensitivity profiles of these quantities are expected to exhibit similar characteristics, at least up to frequencies less than 200--400\,MHz~\cite{arridge1992theoretical}.

\subsubsection{Perturbation Monte Carlo generalized to multi-label elements}
\label{sec:pMC_generalized}

All presented MC Jacobian formulas can be directly generalized for alternatively discretized domains by simply replacing the voxel $v$ with any (reasonably small) element, such as, a split voxel or tetrahedron. For simplicity, we will keep denoting these more general elements with $v$. 

Extending to elements with multiple labels, i.e., piecewise constant optical coefficients requires a brief revision of the presented formulas. For the purposes of this manuscript, we will only consider elements split into two parts, but the following generalizes to denser divisions. However, the per-element coefficient changes must be assumed constant. 

First, we need to introduce the pMC formula for the absorption Jacobians. Consider a perturbed absorption coefficient $\mu_a^\delta := \mu_a + \delta \mu_{a,v} \chi_{\Omega_v}$, where $\delta \mu_{a,v}$ is a constant perturbation applied to the piecewise constant absorption coefficient of element $v$. Let us denote the CW, FD and TD intensities and the first moment of the TPSF generally with $P_F \in \{P, I_{\text{TD}}, I_{\text{FD}}, {P}^{(1)}\}$. Inserting $\mu_a^\delta$ into Eq.~\eqref{eq:detection_probability_mBLL}, \eqref{eq:FD_intensity} or \eqref{eq:moment} yields
\begin{align*}
P_F(\mu_a^\delta, \mu_s) &= \int_{\Gamma \in \mathcal{S}} \exp{\left(-\delta \mu_{a,v} L_v(\Gamma)\right)}W(\Gamma; \mu_a)F(\Gamma) \, \mathrm{d}\mathcal{P}(\Gamma; \mu_s)\, , \quad 
F(\Gamma) =
\begin{cases}
\begin{alignedat}{2}
1, & &\quad &\; \text{for } P,\ I_{\text{TD}},  \vspace{1.1em}\\
e^{\mathrm{i}\omega |\Gamma|/c}, & &\quad &\;\text{for } I_{\text{FD}},  \vspace{1.1em}\\
|\Gamma|/c, & &\quad &\; \text{for } {P}^{(1)}, \vspace{1.1em}
\end{alignedat}
\end{cases}
\end{align*}
where $L_v(\Gamma)$ again denotes the total path length of trajectory $\Gamma$ within element $v$. Differentiating with respect to the perturbation and evaluating at zero results in the same derivatives as in Eqs.~\eqref{eq:probability_absorption}, \eqref{eq:x_and_y_mua_derivatives} and \eqref{eq:MC_moment_mua_derivative}, thus the MC estimate is also the same as in Eqs.~\eqref{eq:MC_probability_absorption}, \eqref{eq:MC_X_and_y_mua_derivatives} and \eqref{eq:MC_moment_derivative}. The essential difference is that we did not require the per-element absorption coefficient to be constant, which justified the original direct differentiation.

The scattering Jacobians require a slightly more careful examination. Let $\Omega_{v_1}$ and $\Omega_{v_2}$ denote the two sub-elements with scattering coefficients $\mu_{s,v_1}$ and $\mu_{s,v_2}$, respectively. We again consider a constant change in the scattering coefficient $\mu_s^\delta := \mu_s + \delta \mu_{s,v} \chi_{\Omega_v}$ over the whole element. Similarly as above, inserting this into Eq.~\eqref{eq:detection_probability_mBLL}, \eqref{eq:FD_intensity} or \eqref{eq:moment} yields 
\begin{align}
P_F(\mu_a, \mu_s^\delta) &= \int_{\Gamma \in \mathcal{S}} \exp{\left(-\delta \mu_{s,v} L_v(\Gamma)\right)} \left(\frac{\mu_{s,v_1} + \delta \mu_{s,v}}{\mu_{s,v_1}}\right)^{m(\Gamma; \Omega_{v_1})} \left(\frac{\mu_{s,v_2} + \delta \mu_{s,v}}{\mu_{s,v_2}}\right)^{m(\Gamma; \Omega_{v_2})} \notag \\
& \quad \quad \times W(\Gamma; \mu_a)F(\Gamma) \, \mathrm{d}\mathcal{P}(\Gamma; \mu_s),
\label{eq:general_probability_absorption}
\end{align}
where $m(\Gamma; \Omega_{v_1})$ and $m(\Gamma; \Omega_{v_2})$ are the total numbers of collisions along trajectory $\Gamma$ within the two sub-elements.

Applying once more the strategy of differentiating with respect to $\delta \mu_{s,v}$ and evaluating at $\delta\mu_{s,v} = 0$, similarly to the steps in Eqs.~\eqref{eq:inserted_mus_pert}--\eqref{eq:probability_scattering}, we end up at
\begin{align}
\frac{\partial P_F(\mu_a, \mu_s)}{\partial \mu_{s,v}} &= \left. \frac{\partial P_F(\mu_a, \mu_s^\delta)}{\partial (\delta \mu_{s,v})}\right|_{\delta\mu_{s,v}=0} \notag  \\
&= \int_{\Gamma \in \mathcal{S}} \left(\frac{m(\Gamma; \Omega_{v_1})}{\mu_{s,v_1}} + \frac{m(\Gamma; \Omega_{v_2})}{\mu_{s,v_2}} - L_v(\Gamma) \right) W(\Gamma; \mu_a)\, F(\Gamma) \, \mathrm{d}\mathcal{P}(\Gamma; \mu_s).
\label{eq:general_probability_scattering}
\end{align}
Consequently, the derivatives resemble Eqs.~\eqref{eq:probability_scattering}, \eqref{eq:x_and_y_mus_derivatives} and \eqref{eq:MC_moment_mus_derivative}, the only difference being that the scattering counts in each sub-element have to be scaled with the respective scattering coefficient. The same difference applies to computing the MC correspondents.  

\subsubsection{Further discussion regarding physical quantities}
\label{sec:physic_quantities}

Alternative scaling schemes can be used to interpret detected photon packet weights as physically meaningful quantities. The detection probability~\eqref{eq:MC_probability} can be interpreted as the detected energy or power for a unit source. Consequently, when attenuation is defined as the base-10 logarithmic ratio of the input to the detected energy or power, it can be estimated from the detection probability as $-\log_{10} {\hat{P}}$. This formulation also applies to intensity under the (Beer-)Lambert law~\cite{cope1991development}, and more generally depending on instrumentation and calibration. 

Dividing the detection probability or the complex intensity in Eq.~\eqref{eq:MC_FD_intensity} by the detector area is used to estimate the diffuse reflectance~\cite{yao2018direct} or exitance for a unit-strength source~\cite{leino2019valomc, kangasniemi2024stochastic}, where detection direction can also be incorporated~\cite{yamamoto2016frequency}. Yet, the true ``detector area'' depends on the instrument and may differ from the reception area on the surface (see Sec.~\ref{sec:detector_implementation}). Generally, realistic source and detector models are required to estimate actual signal strengths. On the other hand, the raw detector outputs are often not inherently expressed in (the wanted) physical units. One solution is to calibrate the output by matching measured absolute data to model predictions~\cite{nissila2005instrumentation}. Simulated values for a different model can then be shifted to the calibrated baseline~\cite{maria2022imaging}. 

In the end, constant scaling or normalization of the complex intensities in Eq.~\eqref{eq:MC_FD_intensity}, including normalization by the number of simulated photon packets, effectively cancels out in log-amplitude and phase difference and derivatives, and can thus be omitted as in~\cite{hirvi2023effects}. As a result, the source power or any positive constant multiplicative factor does not affect reconstruction from log-amplitude and phase difference data and can therefore be omitted. This also holds for TD log-intensity and mean TOF, thus all measurements that are observed in more detail in this work. 

\subsection{Diffusion approximation in the frequency domain}

The DA describes light propagation in terms of the photon density (or fluence) rather than the radiance. In the FD case, the photon density is obtained from the radiance $\hat{L}$ by integrating over all directions as~\cite{arridge1999optical, heino2002estimation}
\begin{align*}
\Phi(\boldsymbol{x})=\int_{\hat{\boldsymbol{s}} \in \mathbb{S}^2} \hat{L}(\boldsymbol{x}, \hat{\boldsymbol{s}}) \, \mathrm{d}S(\hat{\boldsymbol{s}}).
\end{align*}
Under the DA to the FD RTE, the photon density $\Phi_k \in H^1(\Omega; \mathbb{C})$ resulting from illumination at the $k$-th source is the unique~\cite{grisvard2011elliptic} solution to the following elliptic Robin boundary value problem (cf.~\cite{arridge1999optical, heino2002estimation, hirvi2023effects, schweiger1995finite, hannukainen2015edge}):

\begin{align}
\begin{cases}
\begin{alignedat}{2}
- \nabla \boldsymbol{\cdot} \left(\kappa \nabla \Phi_k\right) 
  + \Big(\mu_a - \mathrm{i}\frac{\omega}{c}\Big) \Phi_k
  &\;= 0, & &\quad \text{in } \Omega,  \vspace{1.0em}\\
\frac{1}{4}(1-\rho)\Phi_k + \frac{1}{2}(1+\rho)\, \boldsymbol{\nu}\boldsymbol{\cdot}\kappa \nabla \Phi_k
  &\;= Q_k,& & \quad \text{on } \partial \Omega.
\end{alignedat}
\end{cases}
\label{eq:diffusion_approximation}
\end{align}
Here, $\boldsymbol{\nu}\text{: }\partial \Omega \rightarrow \mathbb{S}^2$ is the outward unit normal of $\partial \Omega$ and $\rho$ is the effective reflection coefficient~\cite{Haskell:94} accounting for the Fresnel reflection on the boundary. The diffusion coefficient $\kappa \in L^\infty_+(\Omega; \mathbb{R})$ is defined in terms of the absorption, scattering, and anisotropy coefficients as~\cite{arridge1999optical, nissila2005diffuse,heino2002estimation}
\begin{align*}
\kappa = \frac{1}{3[\mu_a + (1-g)\mu_s]}.
\end{align*}
In FD, the diffusion coefficient is in some cases modeled as complex-valued~\cite{heino2002estimation}; however, no significant differences were observed in the present work when the two formulations were compared. The boundary source is included via the boundary condition by modeling the input flux $Q_k\text{: }\partial \Omega \rightarrow \mathbb{R}$ as a characteristic function of the $k$-th source divided by its area.

The weak formulation corresponding to \eqref{eq:diffusion_approximation} is obtained by multiplying the diffusion equation by a test function $\psi$, integrating over $\Omega$ by parts, and incorporating the Robin boundary condition. The weak problem reads: find a photon density $\Phi_k \in H^1(\Omega; \mathbb{C})$ such that the variational equality \begin{align} \int_\Omega \left(\kappa \nabla \Phi_k \boldsymbol{\cdot} \nabla \overline{\psi} + \left(\mu_a -\mathrm{i}\frac{\omega}{c}\right)\Phi_k \overline{\psi}\right) \, \mathrm{d}x + \frac{1-\rho}{2(1+\rho)} \int_{\partial \Omega} \Phi_k \overline{\psi} \, \mathrm{d}S = \frac{2}{1+\rho} \int_{\partial \Omega} Q_k \overline{\psi} \, \mathrm{d}S \label{eq:weak_problem} \end{align} holds for all test functions $\psi \in H^1(\Omega; \mathbb{C})$~\cite{heino2002estimation, hirvi2023effects, hannukainen2015edge}.

The detected flux at the $j$-th detector can be evaluated as~\cite{heino2002estimation, hirvi2023effects, hannukainen2015edge}
\[\mathcal{M}_{jk} = \int_{\partial \Omega} P_j(1-\rho) \left(\frac{1}{4} \Phi_k -\frac{1}{2}\boldsymbol{\nu} \boldsymbol{\cdot}\kappa \nabla \Phi_k\right) \, \mathrm{d}S = \frac{1-\rho}{2(1+\rho) } \int_{\partial \Omega} P_j \Phi_k \, \mathrm{d}S,\]
where $P_j$ is the characteristic function of the detector.

\subsubsection{Adjoint method}

The Fr\'{e}chet derivatives of the measurement $\mathcal{M}_{jk}$ with respect to the absorption and scattering coefficients can be computed using the adjoint method~\cite{arridge1999optical, hannukainen2015edge, arridge1995photon}. To derive formulas for the Fr\'{e}chet derivatives, the following adjoint problem is introduced: find $\Phi_j^* \in H^1(\Omega; \mathbb{C})$ such that
\begin{align}
\begin{cases}
\begin{alignedat}{2}
- \nabla \boldsymbol{\cdot} \left(\kappa \nabla \Phi_j^*\right) 
  + \Big(\mu_a - \mathrm{i}\frac{\omega}{c}\Big) \Phi_j^*
  &\;= 0, & &\quad \text{in } \Omega,  \vspace{1.0em}\\
\frac{1}{4}(1-\rho)\Phi_j^*+ \frac{1}{2}(1+\rho)\, \boldsymbol{\nu}\boldsymbol{\cdot}\kappa \nabla \Phi_j^*
  &\;= P_j,& & \quad \text{on } \partial \Omega.
\end{alignedat}
\end{cases}
\label{eq:adjoint_problem}
\end{align}
Here, the boundary input flux $Q_k$ is replaced by the detector function $P_j$, and the weak formulation is derived in the same manner as for the forward problem \eqref{eq:diffusion_approximation}. 

The Fr\'{e}chet derivatives can now be expressed explicitly via the forward and adjoint solutions (cf.~\cite{hannukainen2015edge}).

\begin{theorem}
Let $\Phi_k$ and $\Phi^*_j$ be the unique weak solutions of \eqref{eq:diffusion_approximation} and \eqref{eq:adjoint_problem}, respectively. The Fr\'{e}chet derivative of the measurement map $(\mu_a, \mu_s) \mapsto \mathcal{M}_{jk}$ at $(\mu_a, \mu_s) \in [L_+^\infty(\Omega; \mathbb{R})]^2$ in the direction $(\zeta, \vartheta) \in [L^\infty(\Omega; \mathbb{R})]^2$ can be evaluated as
\begin{align*}
D\mathcal{M}_{jk}(\mu_a, \mu_s)[(\zeta, \vartheta)] &=  \frac{1}{4} (1-\rho) \int_{\Omega } \zeta \left(3\kappa^2 \nabla \Phi_k \boldsymbol{\cdot} \nabla \Phi_j^*-\Phi_k \Phi_j^*\right) \,\mathrm{d}x \\
& \quad + \frac{3}{4} (1-\rho) \int_\Omega \vartheta \kappa^2(1-g) \nabla \Phi_k \boldsymbol{\cdot} \nabla \Phi_j^* \, \mathrm{d}x.
\end{align*}
\label{thm:Fréchet_derivatives}
\end{theorem}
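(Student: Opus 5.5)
The plan is to differentiate the weak formulation \eqref{eq:weak_problem} with respect to the coefficients and then eliminate the derivative of the forward field using the adjoint solution.

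\textbf{Step 1: a symmetric bilinear form.} I would not use the sesquilinear form with $\overline{\psi}$. Instead I introduce the non-conjugated bilinear form on $H^1(\Omega;\mathbb{C})$,
\begin{align*}
a_{\mu}(u,\psi) = \int_\Omega \Big(\kappa \nabla u\boldsymbol{\cdot}\nabla\psi + \big(\mu_a - \mathrm{i}\tfrac{\omega}{c}\big)u\psi\Big)\,\mathrm{d}x + \frac{1-\rho}{2(1+\rho)}\int_{\partial\Omega} u\psi\,\mathrm{d}S .
\end{align*}
Replacing $\psi$ by $\overline{\psi}$ shows that \eqref{eq:weak_problem} is equivalent to $a_\mu(\Phi_k,\psi)=\frac{2}{1+\rho}\int_{\partial\Omega}Q_k\psi\,\mathrm{d}S$ for all $\psi$. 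Likewise, \eqref{eq:adjoint_problem} is equivalent to $a_\mu(\Phi_j^*,\psi)=\frac{2}{1+\rho}\int_{\partial\Omega}P_j\psi\,\mathrm{d}S$. The key structural fact is that $a_\mu$ is symmetric, $a_\mu(u,\psi)=a_\mu(\psi,u)$. This explains why $\Phi_j^*$ enters the claimed formula without complex conjugation.

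\textbf{Step 2: differentiability of $\mu\mapsto\Phi_k$ (main obstacle).} I need that $(\mu_a,\mu_s)\mapsto\Phi_k\in H^1(\Omega;\mathbb{C})$ is Fréchet differentiable from a neighbourhood in $[L^\infty_+]^2$.

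1. The coefficient map is differentiable into $L^\infty$: $\kappa = 1/(3[\mu_a+(1-g)\mu_s])$, with derivative $-3\kappa^2(\zeta+(1-g)\vartheta)$ and a remainder that is $O(\|(\zeta,\vartheta)\|_\infty^2)$ in $L^\infty$. This uses that the essential infima are positive, so the denominator stays bounded away from zero in a neighbourhood.
2. The form $a_\mu$ is uniformly coercive in a neighbourhood, since its real part is bounded below by $\min(\operatorname{ess\,inf}\kappa,\operatorname{ess\,inf}\mu_a)\|u\|_{H^1}^2$. Lax--Milgram then gives uniform bounds on solutions.
3. Subtracting the equations for $\Phi_k(\mu+h)$ and $\Phi_k(\mu)$ gives $a_{\mu+h}(\Phi_k(\mu+h)-\Phi_k(\mu),\psi) = -(a_{\mu+h}-a_\mu)(\Phi_k(\mu),\psi)$. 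This yields a Lipschitz estimate.
4. A second subtraction shows that the candidate derivative $\Phi_k'$, defined by
\begin{align*}
a_\mu(\Phi_k',\psi) = \int_\Omega \Big(3\kappa^2(\zeta+(1-g)\vartheta)\nabla\Phi_k\boldsymbol{\cdot}\nabla\psi - \zeta\Phi_k\psi\Big)\,\mathrm{d}x \quad \forall\psi,
\end{align*}
leaves a remainder of order $\|h\|_\infty^2$ in $H^1$.

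Since $\mathcal{M}_{jk}=\frac{1-\rho}{2(1+\rho)}\int_{\partial\Omega}P_j\Phi_k\,\mathrm{d}S$ is a bounded linear functional of $\Phi_k$ by the trace theorem, the chain rule gives
\begin{align*}
D\mathcal{M}_{jk}[(\zeta,\vartheta)] = \frac{1-\rho}{2(1+\rho)}\int_{\partial\Omega}P_j\Phi_k'\,\mathrm{d}S .
\end{align*}
I expect this bookkeeping of the remainder estimates to be the only technically delicate part.

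\textbf{Step 3: adjoint substitution.} Test the adjoint equation with $\psi=\Phi_k'$ and use symmetry:
\begin{align*}
\frac{2}{1+\rho}\int_{\partial\Omega}P_j\Phi_k'\,\mathrm{d}S = a_\mu(\Phi_j^*,\Phi_k') = a_\mu(\Phi_k',\Phi_j^*).
\end{align*}
The right-hand side equals the derivative equation from Step 2 evaluated at $\psi=\Phi_j^*$. Multiplying by $\frac{1-\rho}{2(1+\rho)}\cdot\frac{1+\rho}{2}=\frac{1-\rho}{4}$ gives
\begin{align*}
D\mathcal{M}_{jk}[(\zeta,\vartheta)] = \frac{1-\rho}{4}\int_\Omega \Big(3\kappa^2(\zeta+(1-g)\vartheta)\nabla\Phi_k\boldsymbol{\cdot}\nabla\Phi_j^* - \zeta\Phi_k\Phi_j^*\Big)\,\mathrm{d}x .
\end{align*}
Regrouping the $\zeta$ and $\vartheta$ terms gives exactly the stated formula.
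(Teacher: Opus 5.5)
Your proposal is correct and is essentially the paper's argument. The paper tests the conjugated adjoint problem with $\psi=\overline{\phi'}$, which is exactly your use of the symmetric bilinear form. The paper also first differentiates with respect to $(\mu,\lambda)$ as independent coefficients, citing Natterer for differentiability, and then applies the chain rule through $\kappa$, whereas you fold $D\kappa$ directly into the derivative equation and sketch the differentiability yourself; this is only a difference in presentation.

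One minor fix in Step 2: coercivity must be stated for $\operatorname{Re} a_\mu(u,\overline{u})$ rather than $a_\mu(u,u)$. The non-conjugated form is not coercive on complex-valued functions (take $u=\mathrm{i}v$ with $v$ real), and Lax--Milgram is applied to the sesquilinear form $(u,\psi)\mapsto a_\mu(u,\overline{\psi})$.
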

The proof of Theorem \ref{thm:Fréchet_derivatives} is provided in Appendix~\ref{sec:proof}.

In particular, the derivatives of the measurement with respect to voxel-wise absorption and scattering coefficients are obtained as
\begin{align}
\frac{\partial \mathcal{M}_{jk}}{\partial \mu_{a,v}} &= \frac{1}{4}(1-\rho)\int_{\Omega_v}  \left(3\kappa^2 \nabla \Phi_k \boldsymbol{\cdot} \nabla \Phi_j^*-\Phi_k \Phi_j^*\right) \,\mathrm{d}x, \label{eq:DA_absorption_derivative} \\
\frac{\partial \mathcal{M}_{jk}}{\partial \mu_{s,v}} &= \frac{3}{4}(1-\rho)\int_{\Omega_v} \kappa^2 (1-g)\nabla \Phi_k \boldsymbol{\cdot}\nabla \Phi_j^* \, \mathrm{d}x. \label{eq:DA_scattering_derivative}
\end{align}

\section{Implementation and modeled instrumentation}
\label{sec:implementation}

\subsection{Models}

Anatomical models were selected for one full-term neonate at a combined gestational and chronological age of 41.3 weeks, and one late preterm neonate at 35.1 weeks, from the database of head models measured under the Developing Human Connectome Project and processed in collaboration between University College London and the Centre for the Developing Brain at King's College London~\cite{CollinsJones2021}. The orientation and segmentation of the models were modified as described in Sections 3.1--3.2 of~\cite{hirvi2023effects} and Section 5.1 of~\cite{hakula2026bilinear}. Following the modifications, each three-dimensional pixel, or voxel $v$ in the head model was assigned to a single tissue type: combined scalp and skull (S\&S), cerebrospinal fluid (CSF), grey matter (GM), or white matter (WM). The optical parameters selected from the literature for each tissue type at the wavelength of 798~nm are listed in Table~\ref{tab:optical_properties}. The modulation frequency was set to 100\,MHz according to the FD instrument at Aalto University. 

\begin{table}[b!] 
\centering
\caption{Optical properties of tissue types in the neonatal head model at 798\, nm~\cite{fukui2003,jonsson2018affective,hirvi2023effects}. Cerebrospinal fluid = CSF.}
\label{tab:optical_properties}
\begin{tabular}{l c c c c}
\hline
Tissue Type & $\mu_a$ [mm$^{-1}$] & $\mu_s$ [mm$^{-1}$] & $g$ & $n$ \\
\hline
Scalp \& Skull & 0.015 & 16.0 & 0.9 & 1.4 \\
CSF (Subarachnoid) & 0.004 & 1.6 & 0.9 & 1.4 \\
CSF (Sulci, Ventricles) & 0.002 & 0.4 & 0.9 & 1.4 \\
Grey Matter & 0.048 & 5.0  & 0.9 & 1.4 \\
White Matter & 0.037 & 10.0 & 0.9 & 1.4 \\
\hline
\end{tabular}
\end{table}

To simulate Jacobians in MCX for comparison to results according to the DA, cubic voxels with physical dimensions $1 \times 1 \times 1$~mm$^3$ were used and all CSF voxels were modeled as semi-diffusive (values for subarachnoid in Table~\ref{tab:optical_properties}). The optodes were roughly circular patches on the boundary with radii $1.5$~mm (sources) and $2.0$~mm (detectors). In MCX, the detection area is the intersection of a sphere with the selected radius and the head surface. Isotropic reception was used in the MC versus DA comparisons. Back-reflections of input light at source locations were neglected, but internal reflections were modeled on exterior boundary. No reflection or refraction events occur within the medium, since the refractive index is assumed constant across all tissue types.

\subsection{Monte Carlo eXtreme}
\label{sec:mcx_implementation}

The v2025.10 release of MCXLAB (pre-compiled) was used for all MC computations, except for the SVMC simulations described in Sec.~\ref{sec:svmc}. All MC simulations were performed on Triton, which is a high-performance computing cluster provided by the School of Science at Aalto University. Each job was randomly assigned one 16--80~GB graphical memory (VRAM) NVIDIA V100/A100/H100 GPU and 1--4 CPU on the 64-bit Linux operating system.  

The Jacobians were computed as the algebraic combinations of various photon metrics obtained separately in the replay mode. All required per-voxel metrics were computed from the same set of sampled photon trajectories in the replay mode, requiring only a single time-consuming forward simulation~\cite{yao2018direct}. The FD Jacobians from Eqs.~\eqref{eq:MC_X_and_y_mua_derivatives}--\eqref{eq:MC_X_and_y_mus_derivatives} were obtained by combining the (complex) per-voxel total weighted path lengths ($\partial \hat{X}/\partial \mu_{a,v}$ and $\partial \hat{Y}/\partial \mu_{a,v}$, as the real and imaginary parts of MCX output type ``\texttt{rf}'', respectively) established in~\cite{hirvi2023effects}, with the per-voxel total weighted scattering counts (numbers of collisions) implemented in this work (output \#1, corresponding to MCX output type ``\texttt{rfmus}''). The TD Jacobians from Eqs.~\eqref{eq:MC_probability_absorption}, \eqref{eq:MC_probability_scattering}, \eqref{eq:meantof_mua_derivative} and \eqref{eq:meantof_mus_derivative} required the pure~\cite{yao2018direct} and the TOF-weighted per-voxel average path lengths (output \#2, $\langle t \ell_{v}\rangle$, MCX output type ``\texttt{wltof}'') and scattering counts (output \#3, $\langle t m_{v}\rangle$, MCX output type ``\texttt{wptof}'') implemented in this work. Mean TOF ($\hat{\langle t \rangle}$) and all absolute data types were computed from the tissue-wise partial paths recorded for each detected photon packet. As a tutorial, the following revises the used Jacobian formulas in combination with the output names for each term in MCX, highlighting the new replay output quantities \#1--\#3:  
\begin{equation}
\begin{cases}
\dfrac{\partial \hat{X}}{\partial \mu_{s,v}} &= \dfrac{1}{N \mu_{s,v}}\, \underbrace{ \displaystyle \sum\limits_{p}^{} \left[ w_{p}\, m_{p,v}\, \cos{(\omega t_{p})} \right]}_{\text{\textbf{\#1} ``\texttt{rfmus}''}}\, \underbrace{ \underbrace{ - \sum\limits_{p}^{} \left[ w_{p}\, \ell_{p,v}\, \cos{(\omega  t_{p})} \right]}_{\text{``\texttt{rf}'' \cite{hirvi2023effects}}}\, \dfrac{1}{N}}_{{\partial \hat{X}}/{\partial \mu_{a,v}}}\, , \vspace{1.0em}\\
\dfrac{\partial \hat{Y}}{\partial \mu_{s,v}} &= \dfrac{1}{N \mu_{s,v}}\, \underbrace{ \displaystyle \sum\limits_{p}^{} \left[ w_{p}\, m_{p,v}\, \sin{(\omega  t_{p})} \right]}_{\text{\textbf{\#1} ``\texttt{rfmus}''}}\, \underbrace{ \underbrace{ - \sum\limits_{p}^{} \left[ w_{p}\, \ell_{p,v}\, \sin{(\omega  t_{p})} \right]}_{\text{``\texttt{rf}'' \cite{hirvi2023effects}}}\, \dfrac{1}{N}}_{{\partial \hat{Y}}/{\partial \mu_{a,v}}}\, , \vspace{1.0em}\\ 
\dfrac{\partial \hat{\langle t \rangle}}{\partial \mu_{a,v}} &= - \underbrace{\langle t\, \ell_{v} \rangle}_{\text{\textbf{\#2} ``\texttt{wltof}''}}\, +\, \hat{\langle t \rangle}\, \underbrace{ \underbrace{ \langle \ell_{v}\rangle}_{\text{``\texttt{jacobian}'' \cite{yao2018direct}}}}_{- {\partial \ln \hat{I}_{\text{TD}}}/{\partial \mu_{a,v}}}\, , \vspace{1.0em}\\
\dfrac{\partial \hat{\langle t \rangle}}{\partial \mu_{s,v}} &= \dfrac{1}{\mu_{s,v}}\, \underbrace{\langle t\, m_{v} \rangle}_{\text{\textbf{\#3} ``\texttt{wptof}''}}\, -\, \underbrace{\langle t\, \ell_{v} \rangle}_{\text{\textbf{\#2} ``\texttt{wltof}''}}\, -\, \hat{\langle t \rangle}\, \underbrace{ \left( \dfrac{1}{\mu_{s,v}}\, \underbrace{\langle m_{v} \rangle}_{\text{``\texttt{nscat}'' \cite{yao2018direct}}}\, -\, \underbrace{\langle \ell_{v} \rangle}_{\text{``\texttt{jacobian}'' \cite{yao2018direct}}} \right)}_{{\partial \ln \hat{I}_{\text{TD}}}/{\partial \mu_{s,v}}}\, .
\end{cases}
\label{eq:MCXimplementation}
\end{equation}
A demonstration script illustrating the computation of the Jacobians is also provided in~\cite{replay_demo}. It should be noted that the expressions of the scattering Jacobians shown in Eq.~\ref{eq:MCXimplementation} always contain the respective absorption Jacobian on the right-hand-side.  

Each individual term in Eq.~\ref{eq:MCXimplementation} is essentially a sum over detected photon packets. Since the Triton cluster provides multiple GPUs for embarrassingly parallel computations, simulations were performed in batches with a reduced number of photon packets, and the independent contributions were accumulated in post-processing. This reduction was also observed to minimize the risk of random failure due to overheating or other reasons. Normalization by the total detected weight for quantities denoted by angle brackets was applied after aggregation, and the MCX default normalization was disabled. The appropriately aggregated FD components were converted to log-amplitude and phase Jacobians using Eq.~\eqref{eq:log-amplitude_and_phase_derivatives}. Simulations were performed using $(5 \times 10^{6})$--$10^{9}$ photon packets per batch, and iterated for a total of $(2 \times 10^{10})$--$10^{13}$ packets, as detailed later for each case. Job-specific random number generator seeds were set to avoid simulating the exact same trajectories. The maximum TOF was limited to one modulation period determined by the modulation frequency of the FD source. 

MC estimators of the RTE-based measurements converge to their corresponding quantities of interest as the number of sampled trajectories increases. The limited sample size introduces stochastic noise, manifesting as pixelation in the Jacobians and affecting, for example, the accuracy of surface-related phenomena: in our simulations, at the head--air interface, a photon packet either exited and was terminated, or reflected back into the medium without weight loss upon reflection. The back-reflection probability was determined by the reflection coefficient. As the number of sampled trajectories increased, the reflection probability was expressed via the number of realizations of the same trajectory. The reflection coefficient depends only on the refractive indices and the angle of incidence governed by the Henyey--Greenstein phase function through $g$. This background supports omitting the reflection coefficient from Eq.~\eqref{eq:detection_probability} onward. 

\subsection{Diffusion approximation solver}

Reference Jacobians based on the DA were computed using an in-house finite element (FE) solver. The solver was originally developed in~\cite{hirvi2023effects} for computing FD absorption Jacobians using scikit-fem~\cite{skfem2020}, and was extended in the present work to also support FD scattering Jacobians. The solver is used here to numerically solve the weak formulation of the FD DA~\eqref{eq:weak_problem} and to compute the corresponding Jacobians as described in Theorem~\ref{thm:Fréchet_derivatives}. The implementation is independent of MCX.

The FE formulation employs a standard first-order Lagrange ($\mathbb{P}_1$) basis defined on a three-dimensional tetrahedral mesh. The FE meshes were constructed directly from the voxel models used in the MC simulations by subdividing each voxel into six tetrahedra and subsequently refining the mesh, particularly in regions near the optodes. This procedure preserves the exact geometry of the original voxel model while enabling a conforming tetrahedral discretization suitable for FEM.

Initially, each tetrahedral element inherited its optical properties from its parent voxel, resulting in piecewise constant parameter fields. To obtain visually smoother Jacobians that are more comparable with those produced by MC simulations, nodal values for each optical parameter were defined as the arithmetic mean of the values in the neighboring tetrahedra, and the parameters were redefined as linear interpolants of these nodal values. While sharp changes in the Jacobians at tissue boundaries are expected in FEM with discontinuous optical parameters, the MC Jacobians tend to appear smoother due to stochastic averaging. The interpolation step therefore serves primarily to facilitate visual comparison between the two approaches.

The adjoint method was applied to compute element-wise absorption and scattering derivatives of the complex intensity by evaluating the integrals in \eqref{eq:DA_absorption_derivative}–\eqref{eq:DA_scattering_derivative} over individual tetrahedral elements of the FE mesh rather than directly over the full voxel domains. The resulting element-wise derivatives were then aggregated by summing the contributions of all tetrahedra associated with the same parent voxel. This summation effectively realizes the integrals in \eqref{eq:DA_absorption_derivative}–\eqref{eq:DA_scattering_derivative} over the voxel $\Omega_v$. Finally, separating the real and imaginary parts of the resulting complex Jacobians yields the log-amplitude and phase Jacobians using formulas~\eqref{eq:log-amplitude_and_phase_derivatives}.

\subsection{Detector models in Monte Carlo} 
\label{sec:detector_implementation}

In MC, the recorded exit positions and directions of detected photon packets can be utilized to weight or limit the packets that contribute to the final output. This sub-section investigates accurate modeling of a prism-terminated fiber detector using MC simulations. This type of detector has been used in several fNIRS systems, including the FD system at Aalto University, developed at the former Helsinki University of Technology~\cite{nissila2002instrumentation, nissila2005instrumentation}. 

\subsubsection{Prism-terminated optical fibers}

Fig.~\ref{fig:detector_prism_NA}A visualizes the cross-section of a single prism terminal of the fiber-optic probe at Aalto University, and the central and outermost photon trajectories that can lead to detection. The trajectories are depicted as red dashed lines. Identical components are used to guide light into the tissue. The fiber bundles lead the photons to photomultiplier tube (PMT) detectors.

\begin{figure}[b!]
\centering
\includegraphics[width=\textwidth]{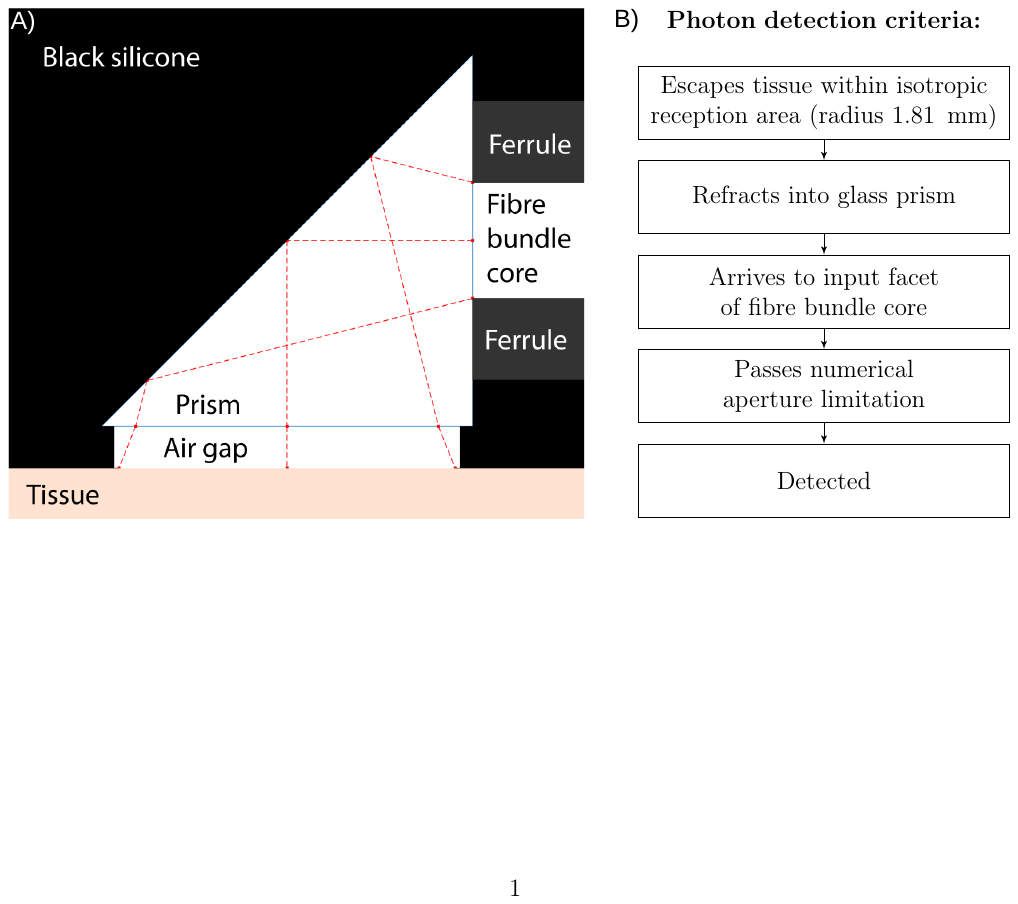} 
\caption{A) The glass prism terminal that collects light exiting the tissue and guides it to the detector fiber bundle core, viewed from the side. B) The test criteria that implement the detection system as a post-processing step following the Monte Carlo simulations.}
\label{fig:detector_prism_NA}
\end{figure}

The right-angle glass prism has an edge length of 4~mm and a refractive index of 1.5168. The optical fiber bundle core has a diameter of 1.25~mm and consists of individual fibers with numerical aperture (NA) of 0.37. The air gap is assumed to be 0.45~mm thick with refractive index of 1.0. The purpose of the air gap is to allow the edges of the silicone support to secure the prism in place while preventing the prism corners from contacting and potentially damaging the scalp. At the source sites, the prism also slightly broadens the light beam, thereby reducing sensitivity to small inhomogeneities on the tissue surface. In addition, the air gap distributes the heat generated by light absorption under the source optode over a larger area, potentially enabling the safe use of higher source power.  

Advantages of the design include multimodal compatibility, potentially high optode density, and good resistance to motion artifacts. A disadvantage is that individual optodes cannot be easily removed once the probe has been cast. Increasing the NA and overall diameter of the core of the detector fiber bundles would improve photon collection efficiency, but may require a corresponding adjustment of the prism size. The construction of a slightly modified design of the fiber-optic probe with entirely non-magnetic components and 3~mm prisms was described in~\cite{autti2025simultaneously}, where the probe was used for simultaneous high-density OT and magnetoencephalography (MEG) in adults. 

\subsubsection{Modeling detection in prism-terminated optical fiber}

For this work, we programmed a function that models the detector optode consisting of prism terminal and optical fiber bundle via a post-processing step that selects the detected photon packets. The criteria that the function tests are listed in Fig.~\ref{fig:detector_prism_NA}B. Only the detected photon packets are replayed when computing the Jacobians in MCX.

The starting point was set as isotropic reception on the head surface within radius of 1.81~mm from the detector position. In MCX, by default all photon packets that exit the tissue within this region are detected, and the post-processing function takes the exit positions and directions of these photon packets as its input. The reception area was selected as the region from which at least some photons can propagate to the fiber bundle via the prism.  

Snell's law is applied to model changes in photon direction at the air--glass interface. A glass square with 4~mm edges is used instead of the prism, as it produces an equivalent result in this geometry due to the law of reflection and symmetry, provided that no scattering or absorption occurs at the hypotenuse. Photons that arrive at the input facet of the detector fiber bundle with an angle of incidence within the NA-limits are detected. 

In the results, the data according to prism-terminated optical fiber detectors is compared to isotropic reception area with radius of $1.81$~mm, i.e., the starting point in the post-processing, which we have also used in combination with real measurements as a compromise to increase SNR in~\cite{maria2022imaging,shekhar2024maternal}. In addition, an isotropic detector with reception area corresponding to the diameter of the optical fiber bundle without the prism terminal is considered.

Cubic voxels with dimensions $0.5 \times 0.5 \times 0.5$~mm$^3$ were used to distinguish superficial phenomena at a higher resolution. Since the limitations of the DA can be ignored when using Monte Carlo, separate optical parameters were assigned for CSF in the subarachnoid region versus less diffusive CSF in the sulci and deeper ventricles~\cite{maria2022imaging, hirvi2023effects}. The sources were modeled as collimated Gaussian beams with waist radius of $1.25$~mm, approximating the light spread at the prism and air gap~\cite{maria2022imaging,shekhar2024maternal}. Reflections at incidence were neglected, although in the real-world implementation of the probe, an air gap exists also between the source prism terminal and the skin. An air gap was modeled between the probe and the skin, allowing light to reflect back into the head at the interface, although in practice the probe is pressed as firmly against the head. Alternatively, the whole probe could be incorporated to the head model, allowing internal reflections only at optode air gaps and outside of the area of the scalp covered by the probe.

\subsubsection{Split-voxel boundary}
\label{sec:svmc}
In the model for the prism-terminated optical fibers, the air gap width (0.45~mm) is the perpendicular (i.e., along the outward normal direction of tissue) distance from the center of the detection area to the center of the lower surface of the glass prism. The detector model adapts the air gap width to account for non-flat surfaces, but a sufficiently smooth exterior boundary is required to justify representing the detection area with a common surface normal direction. Furthermore, the model assumes that the tissue--air interface is also sufficiently convex such that photon packets exiting the tissue cannot re-enter it. A more precise model allowing re-entry would require extending the tissue model to include the air gap. 

We compared the different detector models in a flattened voxel model, and in a planar quad mesh--voxel hybrid model using the split-voxel Monte Carlo (SVMC) approach~\cite{yan2020hybrid}. SVMC improves the modeling of curved boundaries by assigning a tilted plane to define the tissue division in each boundary voxel. Each split voxel then contains two subvolumes and piecewise constant absorption and scattering coefficients. In actual application use, it would be favorable to consider even smoother surface meshes, or meshes composed of elements that fully cover the detector area. Hybrid or fully meshed models can be considered. 

The SVMC tissue boundaries do not align with the voxel boundaries, thus we had to reconsider the definition for the per-voxel derivatives in the split voxels. Since the SVMC outputs follow the voxel grid, and coefficient changes can eventually only be reconstructed as constant for each voxel, it was meaningful to consider the per-voxel coefficient changes as constant. Thus, following Sec.~\ref{sec:pMC_generalized}, the Jacobians can be counted according to the MC counterparts for the theoretical formulas in Eqs.~\eqref{eq:general_probability_absorption} and \eqref{eq:general_probability_scattering}. Furthermore, the SVMC replay mode can aggregate the per-voxel total path lengths and scattering counts similarly to the voxel-based MCX, except that the weighted scattering counts in Eq.~\eqref{eq:MCXimplementation} must be divided by the current scattering coefficient during the replay in SVMC to implement Eq.~\eqref{eq:general_probability_scattering}. In voxel-based MCX, this can and has been left as a manual post-processing task. The difference in handling the division by the scattering coefficient was the only update required to the v2025.10 release of MCX to extend the replay mode to SVMC.

\section{Results}
\label{sec:results}

\subsection{Validation of frequency- and time-domain Jacobians}

The three newly derived MCX output quantities (\#1 -- \#3), required for all FD/TD Jacobian terms shown in Eq.~\eqref{eq:MCXimplementation}, are essentially weighted sums over detected photon packets. As a result, we perform the initial verification at the photon packet level. First, we picked one detected photon packet, and compared the new replay outputs to per-voxel values computed manually using only the previously established replay types. Specifically, pure voxel-wise path lengths and scattering counts were obtained by setting absorption to zero, thereby isolating the geometric photon-packet trajectories. Following this, the obtained per-voxel values were weighted by the explicitly computed detected packet weight, TOF and the cosine- and sine-terms, depending on the replay type.

This validation was then extended to multiple detected photon packets by manually summing contributions over 2, 4, 40, 60, and 201 packets and comparing the results against the corresponding replay outputs. In all cases, the per-voxel relative differences between manually computed and automatically accumulated values were either exactly zero or within the numerical precision expected for single-precision data in MATLAB. Additional consistency checks confirmed that the replay-based Jacobian computation is consistent for different voxel side lengths (0.5~mm and 1~mm), and that the FD outputs behave as expected when setting the modulation frequency to zero, i.e., the imaginary parts vanish. These tests establish the correctness of the MCX implementations. Following these sanity checks, Jacobians computed using MCX were compared against FD Jacobians obtained by solving the DA with FEM.

Figures \ref{fig:N41_det2}--\ref{fig:N172_det3} present axial slices of TD and FD Jacobians computed in the head models of two neonates with voxel edge length of 1\,mm. Figures~\ref{fig:N172_det2} and \ref{fig:N172_det3} represent the same full term neonate; Fig.~\ref{fig:N41_det2} represents a preterm. The selected SDSs are 37\,mm in Fig.~\ref{fig:N41_det2}, 12\,mm in Fig.~\ref{fig:N172_det2}, and 24\,mm in Fig.~\ref{fig:N172_det3}. For the FD simulations, the light source was modulated at a radio frequency of $f = 100$\,MHz; hence, the TD mean TOF Jacobians are scaled by the corresponding angular modulation frequency $\omega=2\pi f$ to enable comparison with the FD phase sensitivities. All Jacobians were simulated using $10^{13}$ photon packets to ensure sufficient SNR for the scattering Jacobians.

The Figs.\ \ref{fig:N41_det2}--\ref{fig:N172_det3} share a common layout: panels (A) and (B) show the head model and the axial slice in the source--detector center plane, respectively, while panels (C--N) display Jacobians computed with MCX and with the DA. Columns 1--3 correspond to TD MCX Jacobians, FD MCX Jacobians, and FD diffusion-approximation Jacobians, respectively, whereas the different row blocks show Jacobians with respect to absorption (rows 1--2) and scattering (rows 3--4).

The computed Jacobians exhibit the characteristic banana-shaped sensitivity profiles connecting the source and the detector~\cite{hillman2007optical, durduran2010diffuse}. The scattering Jacobians appear approximately two orders of magnitude smaller than their absorption counterparts, as expected based on~\cite{yao2018direct}. The FD and TD Jacobians computed with MC are visually very similar in most cases. For the short source–detector separation of 12\,mm in Fig.~\ref{fig:N172_det2}, small differences can be observed between the scaled mean TOF and phase-shift scattering Jacobians at the outermost voxels near the optodes. Nevertheless, these observations support the theory in Sec.~\ref{sec:FD_vs_TD} on internal consistency of the TD and FD Jacobians at the considered low intensity modulation frequency.

When considering the Jacobians computed using the DA, the overall sensitivity profiles exhibit similar spatial shapes to their MC counterparts. However, there are some clearly observable differences in the derivative values. Specifically, the sensitivity profiles computed with the DA tend to predict higher sensitivities than those computed by MC. This is particularly the case for the scattering Jacobians and the absorption Jacobian of the phase shift, whereas the absorption Jacobian of the log-amplitude shows good agreement with its MC counterpart. Similar differences can be observed in the MC versus DA Jacobians for the real and imaginary components of the measurements ($\hat{X}$ and $\hat{Y}$, Eqs.~\eqref{eq:MC_X_and_y_mua_derivatives}--\eqref{eq:MC_X_and_y_mus_derivatives}; not visualized), although the matching overall magnitudes suggest that the source (strength) models in MC and DA correspond to each other well.

The higher sensitivities predicted by the DA are predominantly localized around the subarachnoid CSF layer in all tested cases. As is well known, the DA is not valid in low-scattering media such as the CSF. To investigate whether this limitation explains the observed discrepancies, a modified head model was created in which the CSF layer was replaced by grey matter optical properties, while all other model parameters were kept identical to those used in Fig.~\ref{fig:N172_det3}. A similar strategy has been used previously to assess the influence of the CSF layer~\cite{heiskala2007significance}. This model no longer represents a realistic human head, but instead corresponds to a multi-layer, fully diffusive and turbid medium for which the DA is justified.

The FD Jacobians computed with MC and with the DA in this fully diffusive model are shown in Fig.~\ref{fig:N172_det3_no_CSF}. In addition to axial slice visualizations, the figure shows depth-resolved sensitivity profiles obtained by aggregating Jacobian values within 1-mm-thick disk-shaped cross sections along an 18-mm–radius cylinder. The values for each cylinder are normalized with the respective maximum absolute disk value. The cylinder is positioned midway between the source and detector and extends into the head along the surface normal. The sensitivity profiles computed with MC and the DA show close agreement, both in the disk-averaged profiles and in the axial slices. This result validates our MC FD Jacobian implementation and attributes the differences in Figs.~\ref{fig:N41_det2}--\ref{fig:N172_det3} to the known limitations of the DA. A qualitative comparison of Figs.~\ref{fig:N172_det3} and \ref{fig:N172_det3_no_CSF} further indicates that the presence of CSF increases sensitivity to tissues adjacent to the CSF, consistent with previous observations~\cite{heiskala2007significance}. The increased sensitivity in the presence of low-scattering tissue can be explained by noting that when the voxel-wise scattering coefficient approaches zero, the derivatives in Eqs.~\eqref{eq:MC_probability_scattering}, \eqref{eq:MC_X_and_y_mus_derivatives}, \eqref{eq:meantof_mus_derivative}, and \eqref{eq:MCXimplementation} reduce to absorption derivatives, which are generally 1--2 orders of magnitude larger than scattering derivatives.

\begin{figure}[h!]
\centering 
\includegraphics[width=\textwidth]{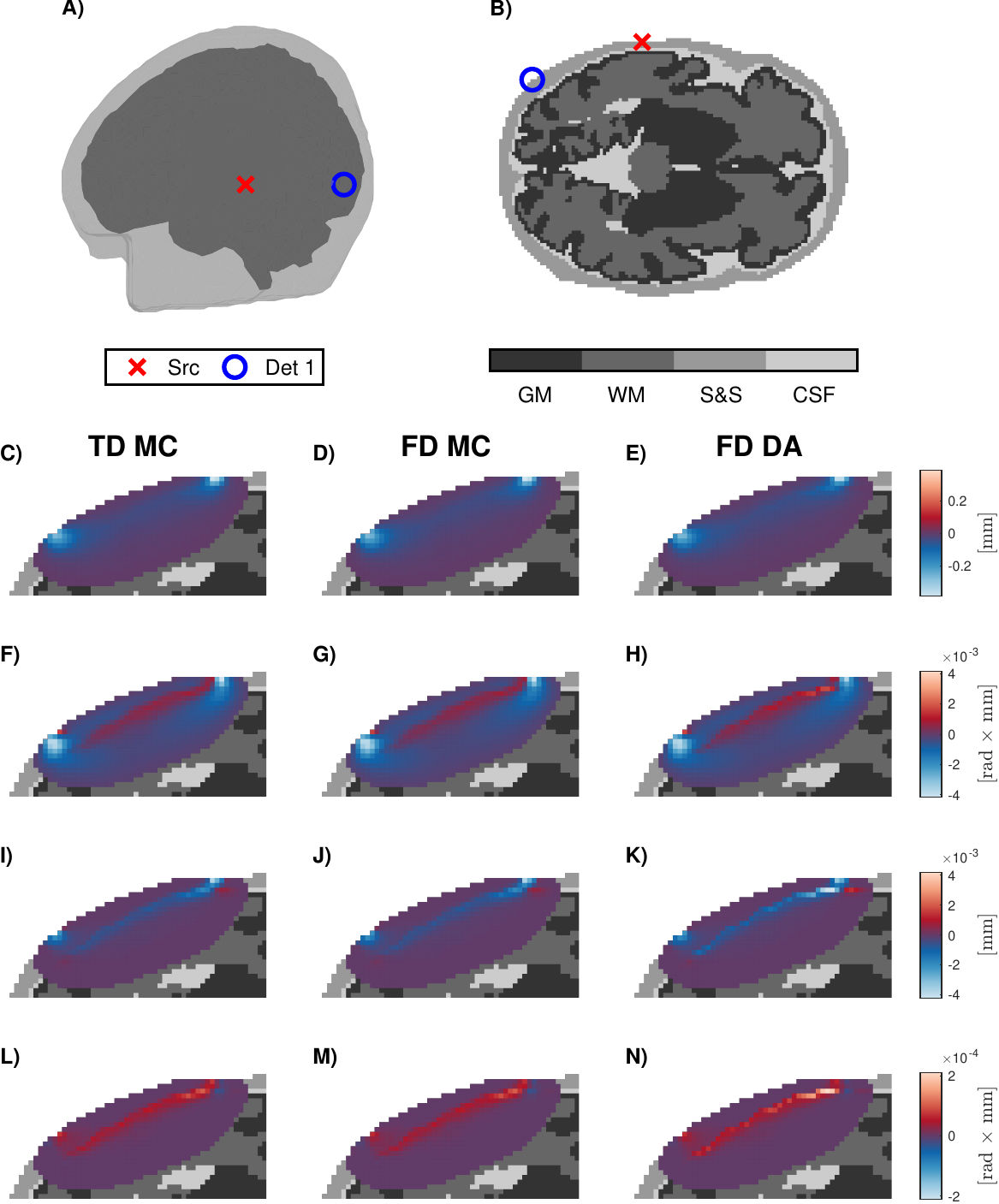}
\caption{Axial slices of time-domain (TD) and frequency-domain (FD) Jacobians in the voxel brain model of a preterm neonate.
(A) Brain model and (B) axial slice in the source--detector plane showing tissue segmentation. Panels (C–N) show Jacobians for a single source–detector pair (detector 1) separated by 37\,mm. Columns correspond to TD Jacobians computed with MCX, FD Jacobians computed with MCX, and FD Jacobians computed using the diffusion approximation (DA) with the finite element method. Rows show Jacobians with respect to voxel-wise absorption (rows 1--2) and scattering coefficients (rows 3--4). The first and third rows correspond to TD log-integrated intensity and FD log-amplitude, while the second and fourth rows show TD scaled mean time-of-flight and FD phase-shift Jacobians.}
\label{fig:N41_det2}
\end{figure}

\begin{figure}[h!]
\centering 
\includegraphics[width=\textwidth]{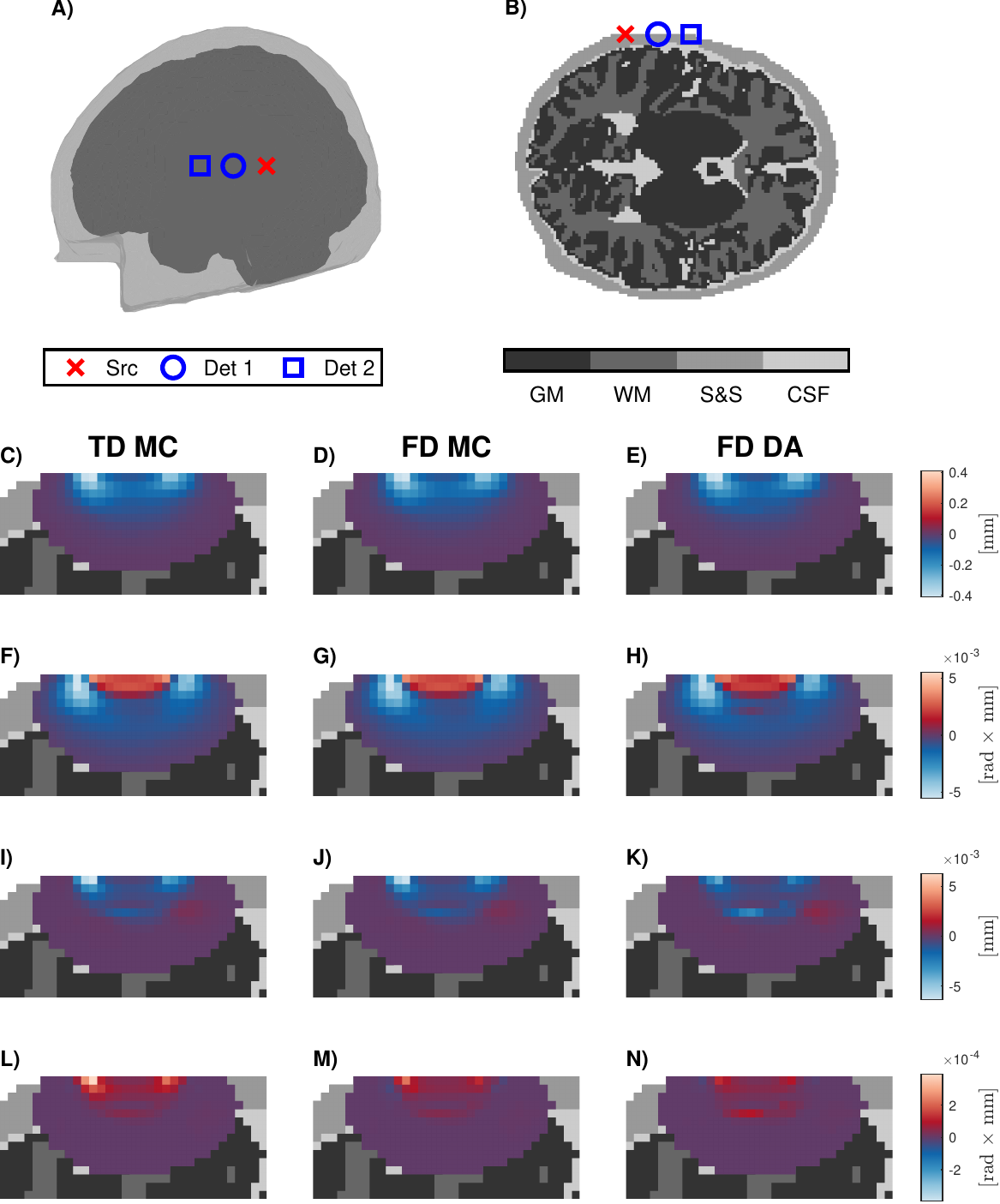}
\caption{Axial slices of time-domain (TD) and frequency-domain (FD) Jacobians in the voxel brain model of a full-term neonate.
(A) Brain model and (B) axial slice in the source--detector plane showing tissue segmentation. Panels (C–N) show Jacobians for a single source–detector pair (detector 1) separated by 12\,mm. Columns correspond to TD Jacobians computed with MCX, FD Jacobians computed with MCX, and FD Jacobians computed using the diffusion approximation (DA) with the finite element method. Rows show Jacobians with respect to voxel-wise absorption (rows 1--2) and scattering coefficients (rows 3--4). The first and third rows correspond to TD log-integrated intensity and FD log-amplitude, while the second and fourth rows show TD scaled mean time-of-flight and FD phase-shift Jacobians.}
\label{fig:N172_det2}
\end{figure}

\begin{figure}[h!]
\centering 
\includegraphics[width=\textwidth]{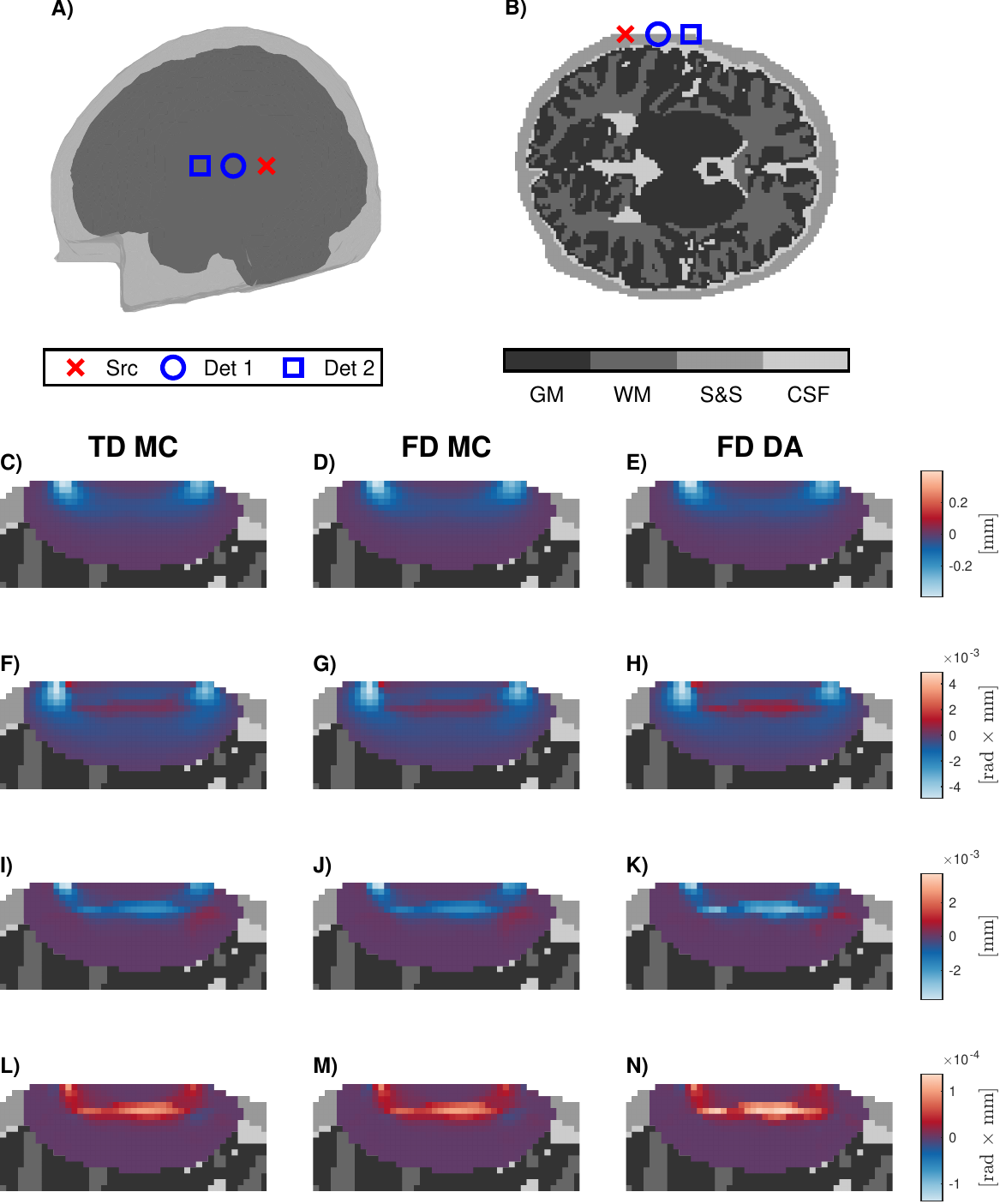}
\caption{Axial slices of time-domain (TD) and frequency-domain (FD) Jacobians in the voxel brain model of a full-term neonate.
(A) Brain model and (B) axial slice in the source--detector plane showing tissue segmentation. Panels (C–N) show Jacobians for a single source–detector pair (detector 2) separated by 24\,mm. Columns correspond to TD Jacobians computed with MCX, FD Jacobians computed with MCX, and FD Jacobians computed using the diffusion approximation (DA) with the finite element method. Rows show Jacobians with respect to voxel-wise absorption (rows 1--2) and scattering coefficients (rows 3--4). The first and third rows correspond to TD log-integrated intensity and FD log-amplitude, while the second and fourth rows show TD scaled mean time-of-flight and FD phase-shift Jacobians.}
\label{fig:N172_det3}
\end{figure}

\begin{figure}[h!]
\centering 
\includegraphics[width=\textwidth]{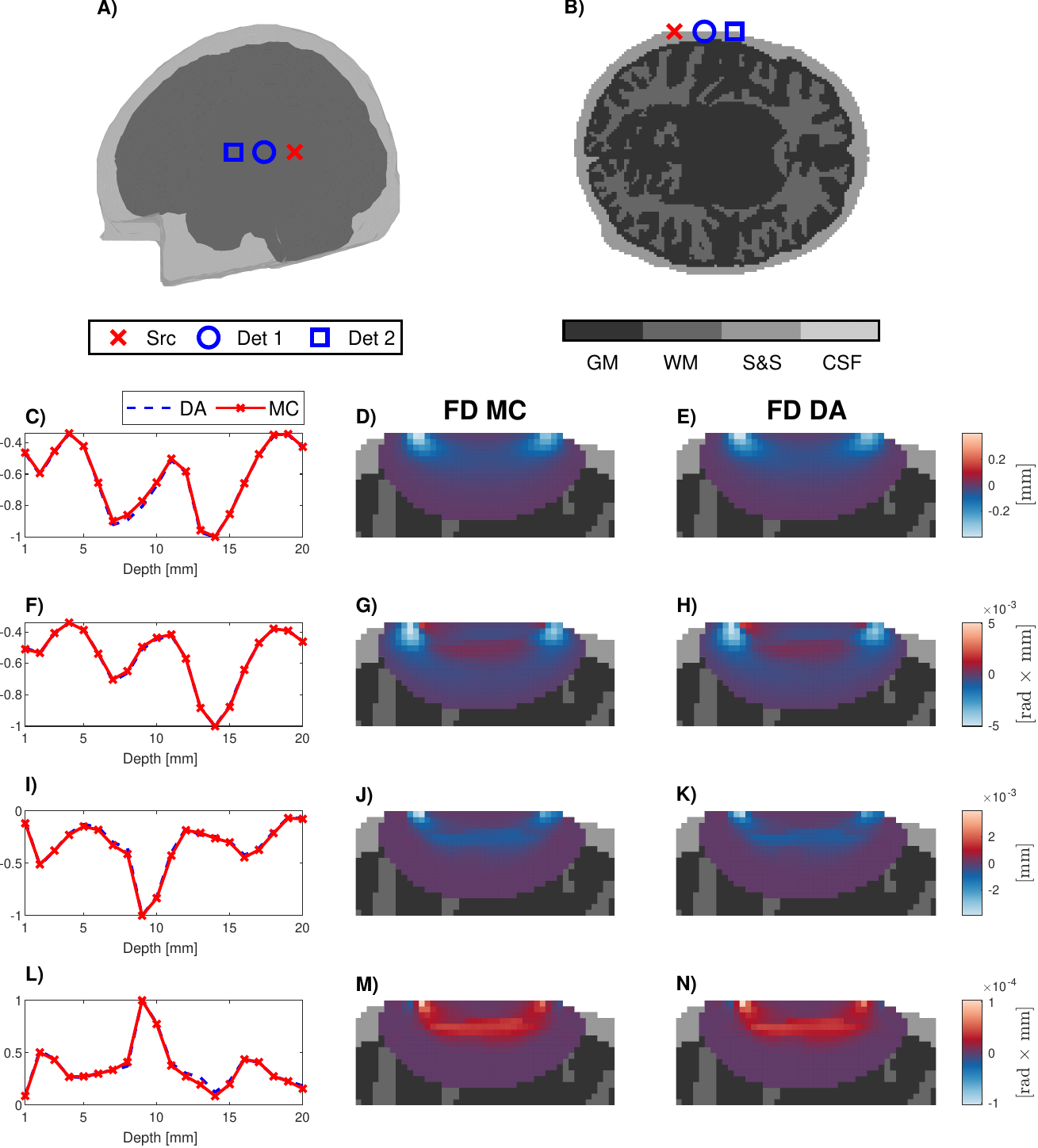}
\caption{Frequency-domain (FD) Jacobians in a fully diffusive model.
(A) Modeled domain and (B) axial slice in the source--detector plane showing tissue segmentation. Panels (C, F, I, L) show depth-resolved sensitivity profiles, computed by aggregating voxel-wise Jacobian values within 1-mm-thick disks of radius 18\,mm (0.75 $\times$ SDS) along a rectangular cylinder drilled into the head midway between the source and detector. The resulting values are normalized by the maximum absolute Jacobian value. Panels (D, E, G, H, J, K, M, N) show axial slices of the Jacobians computed with MCX (second column) and using the diffusion approximation (DA) with the finite element method (third column). Rows show Jacobians with respect to voxel-wise absorption (rows 1--2) and scattering coefficients (rows 3--4). The first and third rows correspond to log-amplitude, while the second and fourth rows show phase-shift Jacobians. Compared to Fig.~\ref{fig:N172_det3}, the cerebrospinal fluid layer was assigned grey matter optical properties, yielding a fully diffusive multi-layer model in which the DA is expected to be valid.}
\label{fig:N172_det3_no_CSF}
\end{figure}

\clearpage
\subsection{Detector and boundary models in Monte Carlo}

Figs.~\ref{fig:DetectorModels_Idea}--\ref{fig:SVMC_DetectorModels_Jac_Slices} illustrate the effects of the refined detector model on photon exit positions and directions, absolute data, and sensitivity profiles compared with isotropic reception. The model accounts for the prism terminal and the NA limitation at the optical fiber. All figures correspond to the same axial slice of the preterm neonatal head model, also used in Fig.~\ref{fig:N41_det2}. Figs.~\ref{fig:DetectorModels_Idea}--\ref{fig:DetectorModels_Jac_Slices_4mm} include one source and five detectors with SDSs of 4, 6, 10, 14, and 18~mm in a manually flattened region, whereas Fig.~\ref{fig:SVMC_DetectorModels_Jac_Slices} contains a single source--detector pair with SDS of 5.7~mm in a curved, terraced region. Relatively short SDSs and a voxel side length of 0.5\,mm are employed throughout.

Fig.~\ref{fig:DetectorModels_Idea} shows the exit positions and directions for five randomly selected photon packets per detector that exit within the isotropic detection area, and either A) continue to the optical fiber or B) are rejected. Photon packets that fail to reach the optical fiber are observed to typically exit either at relatively large angles with respect to the surface normal or near the boundary of the isotropic reception area.

\begin{figure}[h!]
\centering
\includegraphics[width=\textwidth]{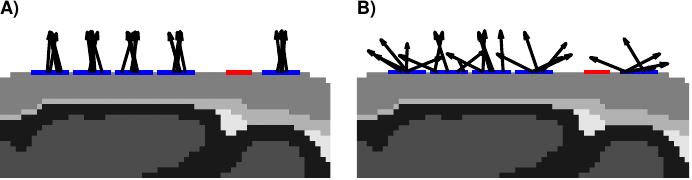}
\caption{Exit positions and directions (black arrows) for photon packets that pass the isotropic reception area with radius of 1.81\,mm and are either A) accepted or B) rejected by the prism and numerical aperture limitation. The red line indicates the source region, and the blue lines indicate detector regions for source--detector separations of 4, 6, 10, 14, and 18~mm.}
\label{fig:DetectorModels_Idea}
\end{figure}

Figs.~\ref{fig:DetectorModels_AbsData}A and B show the simulated FD absolute data as a function of SDS for four different detector models. The log-amplitude values are normalized to unit source strength. Black color corresponds to the isotropic reception area with radius of 1.81\,mm, which is the initial step in the refined detection procedure in Fig.~\ref{fig:detector_prism_NA}B. Red color marks the simulated measurements according to the prism-terminated optical fiber, i.e., photon packets that pass the selection procedure. Rejection of photon packets reduces the detected amplitudes approximately by 98\%, which agrees with the photon packet rejection ratio. If the lowest log-amplitude values are aligned, both detector models produce similar trends, although isotropic reception yields slightly higher values at the shortest SDS (not visualized). Photon TOFs are marginally increased due to the restricted detection.

To discover the step in the detection procedure in Fig.~\ref{fig:detector_prism_NA}B that rejects most photon packets, Fig.~\ref{fig:DetectorModels_AbsData} also shows the measurements obtained by isotropic reception over the cross-sectional area of the optical fiber bundle (blue) and the NA-limiting optical fiber without the glass prism between the air gap and the fiber bundle (pink; overlaps with red). The latter corresponds to a realistic alternative probe design, where the fibers are mounted directly into the probe. The glass prism yields moderately higher maximum amplitude values, presumably because it increases the gap between the head surface and the fiber, which widens the acceptance area on the head surface. Clearly, mainly the reception area, the diameter of the fiber bundle, and the NA-limitation contribute to the amplitude loss. Furthermore, setting the isotropic reception area according to the diameter of the fiber bundle gives a more realistic estimate of the NA-limitation, since the original larger reception area simply estimates the exit region from which photons could end up to the fiber bundle. This should also be noted when comparing the sensitivity profiles for the large isotropic reception area versus the refined detection procedure. 

\begin{figure}[t!]
\centering
\includegraphics[width=0.9\textwidth]{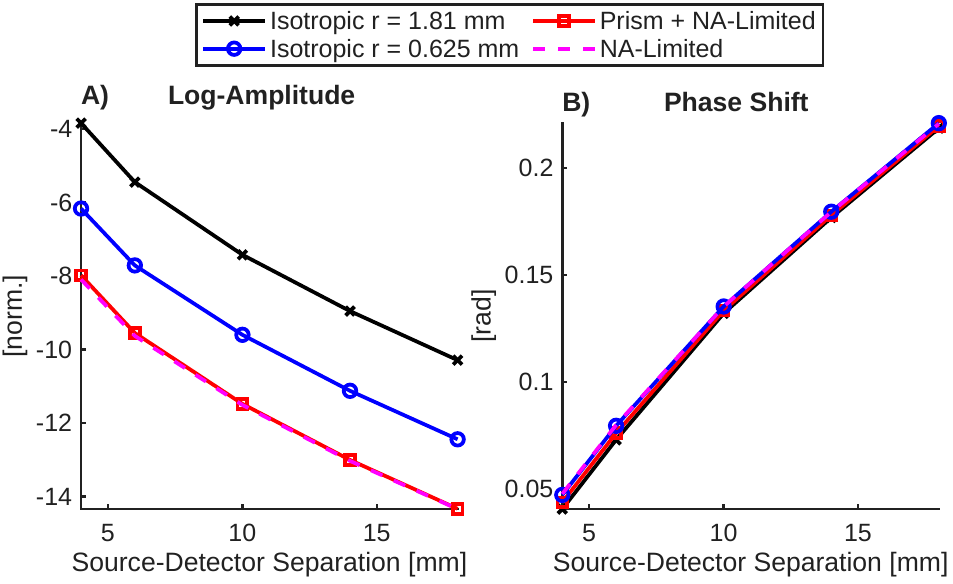}
\caption{Simulated frequency-domain A) log-amplitude and B) phase data according to four different detector models in Monte Carlo.} 
\label{fig:DetectorModels_AbsData}
\end{figure}

Fig.~\ref{fig:DetectorModels_Jac_CylinderDrill} visualizes depth-resolved sensitivity profiles (as in Fig.~\ref{fig:N172_det3_no_CSF}) for all eight Jacobian types and the two detector models for SDSs of 4, 6, and 10~mm and $5\times 10^{11}$ simulated photon packets. The cylinder cross sections are 4-millimeter high rectangles with width corresponding to half of the respective SDS. The values are normalized with the maximum absolute value along the cylinder. Black color corresponds to the isotropic reception area with radius 1.81\,mm, and red color is restricted to the photon packets that pass the prism-terminated optical fiber detection procedure in Fig.~\ref{fig:detector_prism_NA}B. Figs.~\ref{fig:DetectorModels_Jac_Slices_4mm} shows the corresponding axial slices at the source--detector level for the shortest channel with an SDS of 4~mm. The prism-terminated optical fiber detectors reduce the relative sensitivity at the most superficial tissues and slightly increase it deeper in the head. The CSF layer at 3~mm depth can be clearly seen in the scattering Jacobians for the largest SDS, where sensitivity is less focused on the extracerebral tissue.

\begin{figure}[b!]
\centering 
\includegraphics[width=\textwidth]{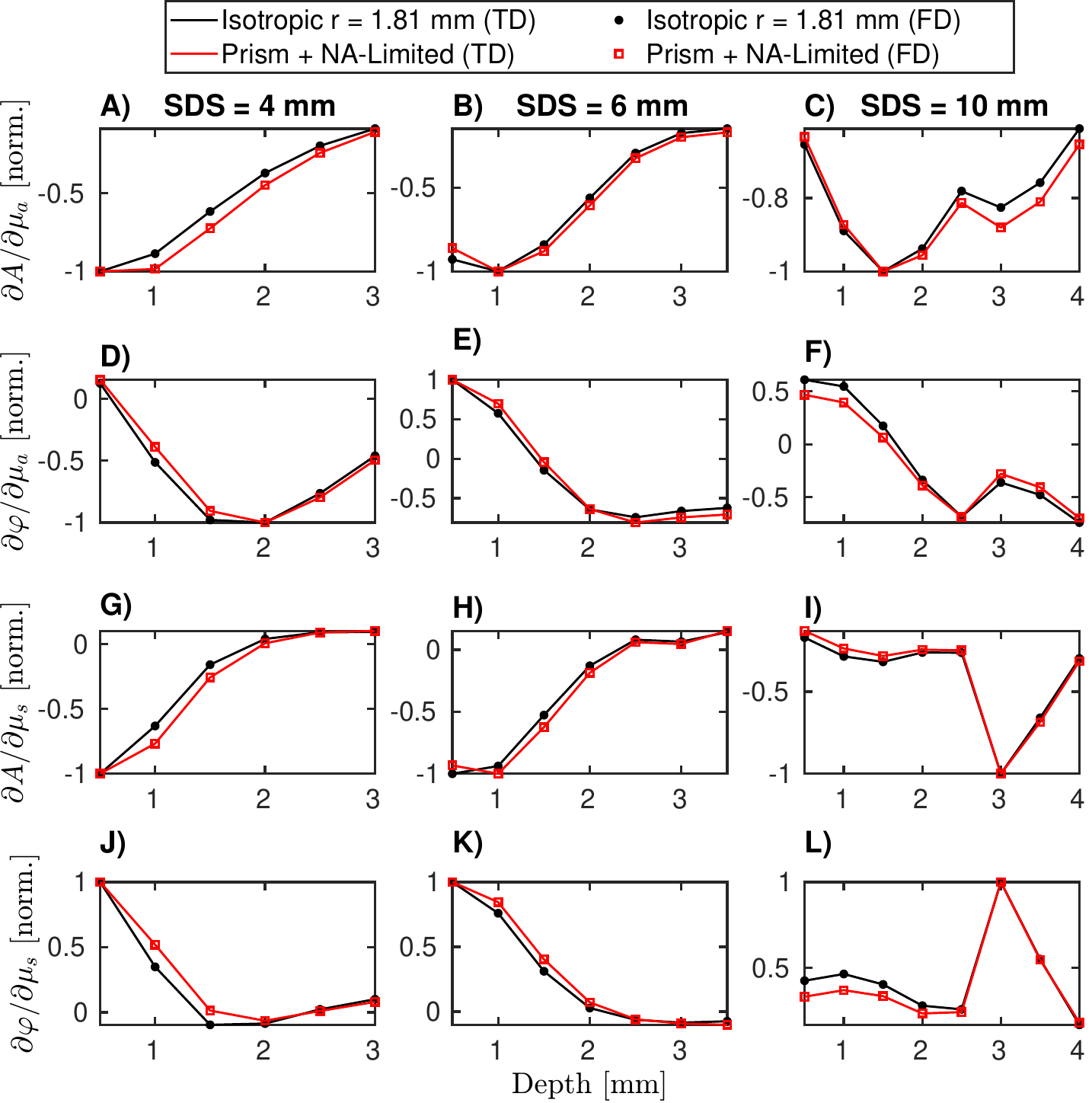} 
\caption{Jacobians for reception area radius of 1.81\,mm with isotropic (black) versus numerical aperture (NA) limited detection in prism-terminated optical fibers (red), evaluated over 0.5-millimeter thick intersections along a rectangular cylinder drilled into the head half-way between the respective source--detector pair. The source--detector separations (SDSs) are 4, 6, and 10\,mm for columns 1, 2, and 3, respectively. The cylinder width is half the corresponding SDS, and the height is 4\,mm. First row: intensity and amplitude versus absorption. Second row: mean time-of-flight (TOF) and phase versus absorption. Third row: intensity and amplitude versus scattering. Fourth row: mean TOF and phase versus scattering. Only the FD data type is labeled on the vertical axis for clarity. TD = time-domain. FD = frequency-domain}
\label{fig:DetectorModels_Jac_CylinderDrill}
\end{figure}

\begin{figure}[h!]
\centering
\includegraphics[width=\textwidth]{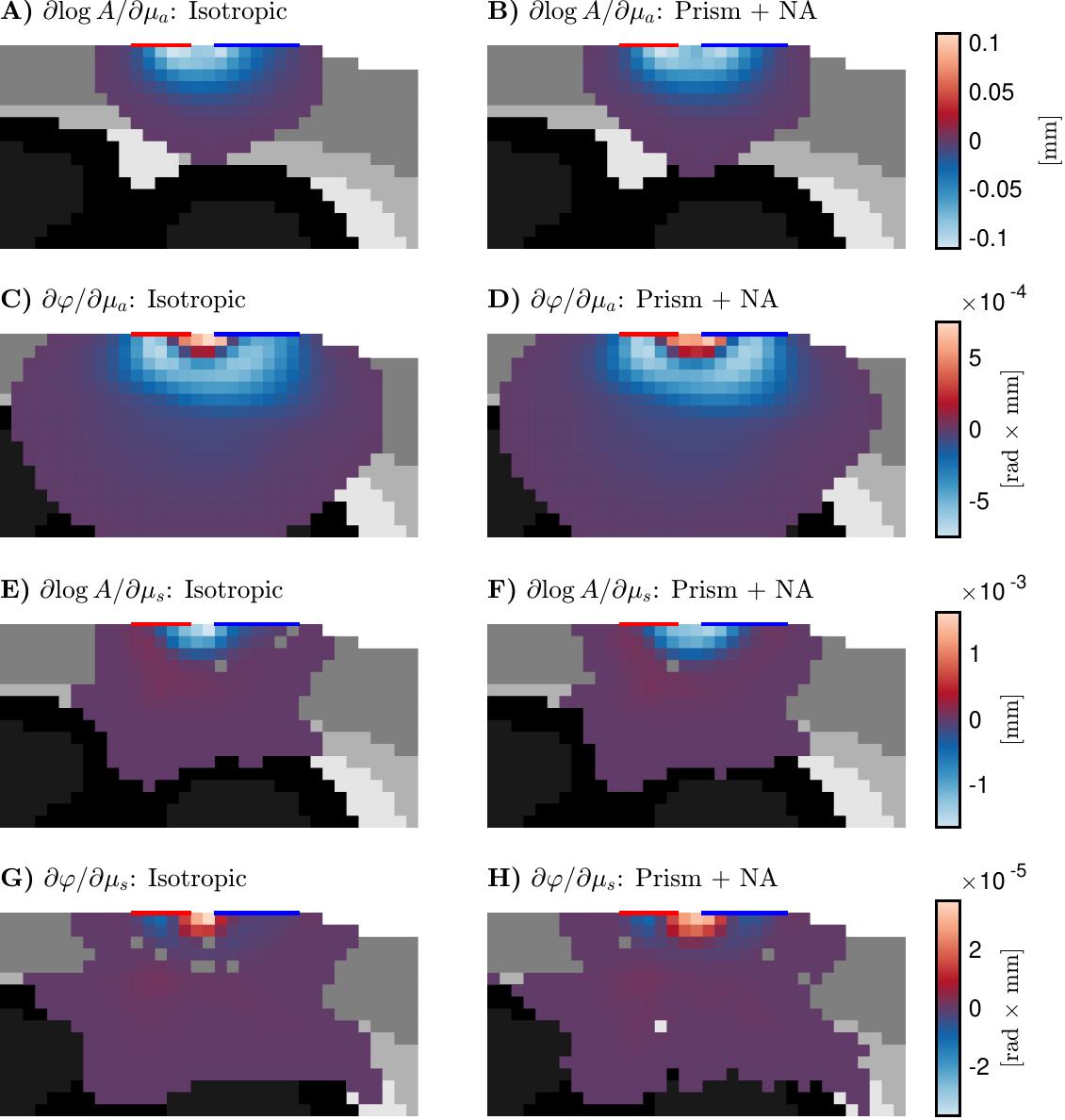} 
\caption{Axial slices of frequency-domain Jacobians of log-amplitude (A, B, E, F) and phase (C, D, G, H) with respect to absorption (A--D) and scattering (E--H) computed for a single source--detector pair 4.0\,mm apart with $5\times 10^{11}$ simulated photon packets in MCX. (A, C, E, G) Jacobians obtained with an isotropic detection area with radius of 1.81\,mm; (B, D, F, H) Jacobians obtained with the numerical-aperture-limiting model for prism-terminated optical fiber detectors. The visualization threshold is 0.5\% (A--D) or 0.2\% (E--H) of the maximum absolute value in each row. The red and blue lines indicate the source and detector regions, respectively. The voxel side length is 0.5\,mm; accordingly, the magnitudes are approximately eight times smaller than at 1-millimeter resolution.}
\label{fig:DetectorModels_Jac_Slices_4mm}
\end{figure}

Finally, Fig.~\ref{fig:SVMC_DetectorModels_Jac_Slices} compares the FD Jacobians for a short channel with an SDS of 5.7\,mm obtained using isotropic reception in a voxel model versus NA-limited detection by the prism-terminated optical fiber in the hybrid model. We simulated $3.5 \times 10^{12}$ photon packets for the limited detection model, whereas $2 \times 10^{10}$ was sufficient for the isotropic model at the considered short SDS. The plane normals of split boundary voxels deviate by 1.2--7.8$\degree$ (median 3.9$\degree$, mean 4.4$\degree$, standard deviation 1.7$\degree$) from the selected mean normal direction in the detection area. The NA limitation lowers sensitivity in the most superficial voxels, similarly to Fig.~\ref{fig:DetectorModels_Jac_Slices_4mm} for the fully voxelated models. Otherwise, the similarity of the sensitivity profiles and magnitudes indicates consistency of the Jacobian simulations in the SVMC mode. However, the scattering Jacobians exhibit suspiciously large values in the voxels that overlap with the source, suggesting that the path length contribution is too small relative to the scattering count, although it would seem more likely that the latter could be too small in a split $0.5 \times 0.5 \times 0.5$~mm$^3$ voxel. Some blurring is introduced when the SVMC grid is shifted to match the voxel grid. 

\begin{figure}[h!]
\centering
\includegraphics[width=\textwidth]{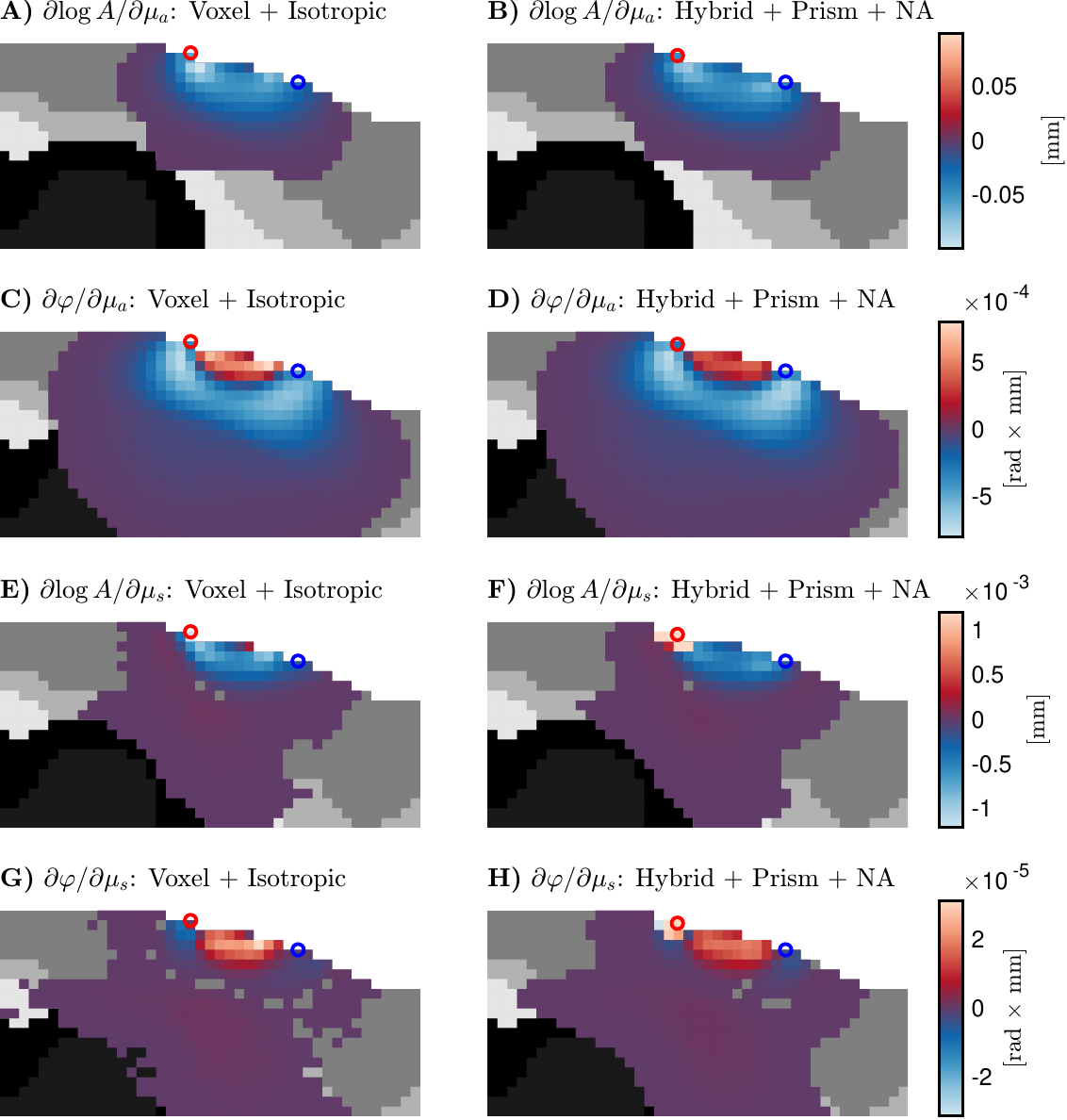}
\caption{Axial slices of frequency-domain log-amplitude (A, B, E, F) and phase (C, D, G, H) Jacobians with respect to absorption (A--D) and scattering (E--H), computed for a single source–detector pair 5.7~mm apart. The first column (A, C, E, G) considers a voxel model with $2 \times 10^{10}$ simulated photon packets and an isotropic reception profile. The second column (B, D, F, H) presents a split-voxel hybrid model with $3.5 \times 10^{12}$ simulated photons packets and prism-terminated optical fiber limiting the detection rate based on the numerical aperture (NA). The split boundary voxels' plane normals are at a 1.2--7.8$\degree$ angle with respect to the selected mean normal direction in the detection area. The voxel side length is 0.5\,mm; accordingly, the magnitudes are approximately eight times smaller than at 1-millimeter resolution. Visualization threshold is set to 0.5\% (A--D) or 0.2\% (E--H) of absolute maximum in each row, which also defines the colorbar limits.} 
\label{fig:SVMC_DetectorModels_Jac_Slices}
\end{figure}

\clearpage
\section{Discussion}
\label{sec:discussion}

Optical tomography (OT) utilizes visible red and near-infrared light for three-dimensional imaging. Frequency-domain (FD) and time-domain (TD) optical imaging systems provide time-resolved information and multiple measurement types with different sensitivity profiles, which can be utilized to improve image quality. Monte Carlo (MC) methods are typically considered the gold standard for computing the sensitivity profiles for the measurements, or the Jacobians. Monte Carlo eXtreme (MCX) is a widely used open-source software for efficient parallel simulation of photon packets. 

In this work, we derived formulas for directly computing sensitivity profiles in MC for the FD log-amplitude and phase-shift measurements, in addition to the TD intensity and mean time-of-flight (TOF) measurements, with respect to the piecewise constant absorption and scattering coefficients in the domain. FD scattering Jacobians, as well as mean TOF absorption and scattering Jacobians, were implemented to complement previously published Jacobian types in MCX. A detector function was developed to select photon packets that can realistically be detected based on their exit positions and directions at the tissue boundary, consistent with imaging probes using prism terminals and optical fibers. In addition to voxelated models, planar quad mesh--voxel hybrid models with split boundary voxels were considered for modeling the curvature of the boundary more accurately.

A common alternative to MC simulations is to use the finite element method (FEM) to numerically solve the diffusion approximation (DA) to the radiative transfer equation. The new MC-based Jacobian types were compared against DA Jacobians computed using an in-house FEM solver developed at Aalto University. The Jacobians obtained from MC and FEM/DA showed close agreement, particularly when the less diffusive CSF in the model was replaced by fully diffusive tissue. The effect of different anisotropy factors on the efficiency of computation of the Jacobians using Monte Carlo is a subject of further study.  

All the examples presented in this work consider neonatal head models segmented into four or five tissue types. These correspond to complex heterogeneous structures, but the MC software is freely available to be used with any medium where the RTE is a suitable model for light transport. The added sensitivity modalities can also be extended to compute Jacobians for higher temporal moments, such as the variance of the TPSF~\cite{hillman2000calibration} or additional FD data types~\cite{sassaroli2023novel}.

In functional brain imaging with OT, hemodynamic changes are typically assumed to primarily affect the absorption coefficient. Consequently, changes in the scattering coefficient are often not reconstructed, and the corresponding scattering Jacobians are omitted from the linearized inverse problem. An interesting question is how scattering Jacobians could be used in practical neuroimaging applications. Neuronal activation induces vasodilation which could lead to measurable scattering changes. A faster optical signal which is possibly caused by scattering changes associated with electrophysiology has been investigated in animals~\cite{rector1997light} and humans ~\cite{steinbrink2005fast}. The imaging of neuronal activation can benefit from accurate modeling of the complex optical structure of the human head. If homogeneity within each tissue type can be assumed and anatomical prior information such as segmented MRI or atlas is available, the initial optical parameters of the most important tissue types can be estimated by fitting simulated data to calibrated measurements even without explicit image reconstruction~\cite{maria2022imaging}. In case the tissues are not modeled adequately by a model which assumes homogeneity within each tissue type, imaging of scattering and absorption using both Jacobians and nonlinear image reconstruction using regularization that respects the tissue boundaries could lead to improved accuracy. Given the observed close agreement between MC and DA simulations, one possible strategy would be to use DA to obtain a computationally efficient initial estimate, and then refine it using MC together with the improved detector model for source-detector pairs with the shortest separations as well as segmented non-diffusive regions to improve the accuracy of the estimated parameters. Incorporating a realistic detector model is expected to improve the agreement between simulated and measured absolute intensities and amplitudes, and may result in more robust estimates of the baseline optical parameters. In the future, MCX with the presented Jacobians could be used with nonlinear reconstruction methods incorporating anatomical priors to allow spatially varying baseline optical parameters while preserving the flexibility and accuracy of Monte Carlo methods.

In optical mammography~\cite{fang2009combined}, the location and oxygenation parameters of breast lesions could potentially be reconstructed from differences between the breast signal and a baseline measurement obtained after filling the breast cup with a diffusive, tissue-mimicking liquid. Beyond biomedical contexts, potential applications include environmental sensing, for example, detecting obstacles or inclusions in turbid media such as fog or water. Cross-talk between the optical parameters is commonly seen in images reconstructed based on amplitude and phase in FD (or correspondingly, mean time and integrated intensity in time-domain systems; e.g.~\cite{nissila2006comparison}), and should be considered a limitation of the approach. Cross-talk may be reduced by increasing the optode density.

A limitation of photon-replay-derived scattering Jacobians is the requirement of a large number of simulated photon packets compared to the absorption Jacobians, as also observed in~\cite{kangasniemi2024stochastic, amendola2024_part2}. One reason for this is that scattering Jacobians are computed as the difference between two replay outputs of relatively similar magnitude, as also discussed in~\cite{yamamoto2016frequency}. In this work, $\sim 10^{12}$--$10^{13}$ photon packets were required for visual smoothness in the scattering Jacobians. Stochastic optimization~\cite{kangasniemi2024stochastic} or deep learning methods could be used to denoise and improve the SNR. In highly scattering media, such as most biological tissues (with the exception of CSF, synovial fluid, and a few other low-scattering tissues), DA-based Jacobians can be computed significantly more efficiently. Regardless of whether the forward model is MC or DA, Jacobians can alternatively be computed using the adjoint method~\cite{yao2018direct}. The adjoint approach assumes reciprocity between sources and detectors~\cite{fang2004} and can utilize all the simulated photons to achieve significantly reduced stochastic noise in the Jacobians, whereas the replay method can only utilize detected photon packets, which typically constitute a very small fraction of all the simulated photon packets. When computing adjoint solutions by treating detectors as sources, the emission profile must be configured to match the physical light-collection characteristics of the detector. In addition, an appropriate scaling factor must be derived to ensure consistency of the adjoint Jacobians with the corresponding measurement definitions~\cite{yao2018direct}. A rigorous comparison between the replay and adjoint approaches is left for future work. Finally, a clear advantage of replay is that the Jacobians for multiple baseline absorption coefficients can be computed from the same trajectories due to mBLL, thus with no major increase in computation time~\cite{yao2018direct}.

The ability to accurately model light transport corresponding to short SDSs is a general advantage of MC methods. Including very short SDS channels in high-density probe designs can aid in separating extracerebral signals from brain activity and improve the interpretation of results in the context of imaging neuronal activity. Extracerebral physiology is one of the most significant counfounding factors in optical neuroimaging especially in the case of adult subjects with relatively thick extracerebral tissue layers.  Correct modeling of measurement sensitivity across tissues, even at short source--detector separations, is important for the accurate localization and quantification of physiological changes. SDS values as short as 2.15~mm has been proposed to obtain measurements specific to extracerebral physiology in neonates~\cite{brigadoi2015short}. However, in the present simulations, the combined source and detector radii of 3.06~mm define the lower bound for achievable SDSs. 

Modeling of the prism terminals and the limited numerical aperture (NA) of optical fiber bundles lowers the sensitivity in the most superficial voxels and results in a slight increase in the relative sensitivity to the brain versus extracerebral tissue compared to isotropic detection at short SDSs approximately less than 15\,mm. This difference was observed to stem mainly from the restriction in NA, and could be more distinguishable if the surface had a lower effective scattering coefficient and the exit directions would be less random. The resulting reduction in SNR is an important factor to consider. A small NA can be beneficial in reducing temporal dispersion in TD systems and in minimizing changes in TOF due to bending of the fibers, which may otherwise reduce calibration accuracy and affect resulting estimates of optical parameters. In functional imaging of brain activation, however, minimizing photon shot noise is usually considered to be of higher priority. 

The prism-based fiber terminal model with the current parameters rejects a large fraction of photon packets (approximately 98\%) exiting the tissue within the region sampled by each detector optode, however, the efficacy of photon collection is largely determined by the diameter and numerical aperture of the detector fiber bundle and not significantly affected by the prism terminal in this design. To achieve an acceptable SNR for the simulated measurables and Jacobians with less computation, the detector model could be applied only to short and intermediate source--detector separations ($<$ 2\,cm), while the isotropic model is used for larger separations. Detection efficiency could be improved by increasing the fiber bundle diameter, the NA of the fibers, or both. However, increasing the bundle diameter would also increase cable weight and cost. The current prism size can accommodate fibers with a diameter of 2~mm, but increasing the NA to 0.55 would require a larger prism to avoid photon loss. Both improvements would potentially increase the thickness of the probe, which would be problematic in multimodal imaging, for example in simultaneous OT and MEG using superconducting quantum interference devices (SQUID-MEG) or simultaneous OT and transcranial magnetic stimulation (TMS). A heavier probe might also reduce subject comfort, particularly in the case of infants.

\section{Conclusion} 
\label{sec:conclusion}
In this work, pMC-based Jacobian computation was expanded to accommodate FD and TD measurements, especially sensitivities with respect to scattering coefficients, in voxel-based and mesh-voxel hybrid models. The computation was implemented in the open source, GPU-accelerated MCX simulator. A post-processing function was introduced to model detection in prism-terminated optical fiber-based probes. The pMC-derived Jacobians showed excellent qualitative and quantitative agreement with DA-based Jacobians in the diffusive regimes of two segmented neonatal head models. A limitation of the pMC scattering Jacobians was the required large number ($10^{12}$--$10^{13}$) of simulated photon packets to convergence for channels with source--detector separations of 10--30\,mm in the considered head models. The more realistic detector model with numerical aperture limitation marginally reduced sensitivity to most superficial voxels and increased sensitivity in brain tissue at short source--detector separations compared to isotropic reception. Future work addresses novel approaches to utilize the scattering Jacobians in OT image reconstruction.

\section*{Disclosures}
The authors declare that there are no financial interests, commercial affiliations, or other potential conflicts of interest that could have influenced the objectivity of this research or the writing of this paper.

\section*{Code and Data Availability}
A demonstration script for computing all Jacobian types in MCX is available at \url{https://github.com/fangq/mcx/blob/master/mcxlab/examples/demo_replay_all_jacobian.m}. The new Jacobian types are included in the v2025.10 release of MCX. However, for the split-voxel mode or for simulations with very large numbers of detected photon packets, the more recent nightly build available at \url{https://mcx.space/nightly/} is recommended. The function implementing NA-limiting detection is specific to the instrumentation at Aalto University and is therefore not made publicly available, but can be obtained from the authors upon reasonable request. The neonatal head model database used in this study is available at \url{https://www.ucl.ac.uk/dot-hub}.
 
\section*{Acknowledgements}
PH, JO, and IN were supported by the Research Council of Finland (Flagship of Advanced Mathematics for Sensing, Imaging and Modelling, Grant No.~359181). PH was supported by the Alfred Kordelin Foundation. JO was supported by the Finnish Ministry of Education and Culture’s Pilot for Doctoral Programmes (Mathematics of Sensing, Imaging and Modelling) and by the Research Council of Finland (Digital Waters Flagship, Grant No.~359249). IN was supported by the Päivikki and Sakari Sohlberg Foundation. QF was supported by the US National Institutes of Health (NIH) grants R01-GM114365, R01-CA204443, and R01-EB026998. The UCL neonatal models were obtained using data made available from the Developing Human Connectome Project funded by the European Research Council under the European Union's Seventh Framework Programme (FP/2007-2013)/ERC Grant Agreement No.~(319456). 

The authors thank Dr.~Jarno Rantaharju (Aalto University) for assistance with implementing the new Jacobian types to Monte Carlo eXtreme, Dr.~Topi Kuutela (Aalto University) for programming the original version of the scikit-fem-based finite element solver for the diffusion approximation, Profs.~Nuutti Hyvönen and Antti Hannukainen (Aalto University) for proof-reading the manuscript, and Prof.~Hamid Dehghani (University of Birmingham) and Dr.~Niko Hänninen (University of Eastern Finland) for insightful discussions. The computational resources and guidance provided by the Aalto Science-IT project are gratefully acknowledged. A large language model (ChatGPT) and the Overleaf AI writing assistant were used to assist with language and grammar refinement; all generated content was carefully reviewed and edited by the authors. 

\section*{CRediT}
Conceptualization (PH, IN), Data curation (PH, JO), Formal analysis (PH, JO), Funding acquisition (PH, QF, IN), Investigation (PH, JO), Methodology (PH, JO, IN), Resources (QF, IN), Software (PH, JO, QF, IN), Supervision (IN), Validation (PH, JO), Visualization (PH, JO, IN), Writing -- original draft (PH, JO, IN), Writing -- review \& editing (QF). \\

\noindent
\textbf{Pauliina Hirvi}, MSc in Biomedical Engineering, is a PhD candidate in the Inverse Problems group at Aalto University, Department of Mathematics and Systems Analysis, Espoo, Finland. Her research focuses on generating head models and image reconstruction algorithms for optical tomography, and she has 10 years of experience in Monte Carlo eXtreme software.\\

\noindent \textbf{Jaakko Olkkonen}, MSc in Applied Mathematics, is a Doctoral Researcher in the Numerical Analysis group at Aalto University, Department of Mathematics and Systems Analysis, Espoo, Finland, and a Research Scientist in the Autonomous Mapping and Driving group at the Finnish Geospatial Research Institute, Department of Remote Sensing and Photogrammetry, Espoo, Finland. His research focuses on computational modelling of light transport in optical imaging.\\

\noindent \textbf{Qianqian Fang}, PhD, is an associate professor in the Bioengineering Department at Northeastern University, Boston, Massachusetts, United States. Previously, he served as assistant professor of radiology at the Massachusetts General Hospital. Dr.\ Fang is the main developer of the Monte Carlo eXtreme software. His research interests also include translational medical imaging devices, image reconstruction algorithms, scientific data sharing, and high-performance computing tools to facilitate the development of next-generation imaging platforms.\\

\noindent \textbf{Ilkka Nissilä}, DSc, is a staff scientist and leads the Near-infrared Spectroscopy and Imaging group at Aalto University, Department of Neuroscience and Biomedical Engineering, Espoo, Finland. Dr.\ Nissilä led the design and implementation of the FD-DOT instrument at Aalto University (formerly Helsinki University of Technology) and has extensive experience with the application of DOT to neuroimaging in children and adults.

\appendix 

\section{Proof of Theorem \ref{thm:Fréchet_derivatives}}
\label{sec:proof}

\begin{proof}[Proof of Theorem \ref{thm:Fréchet_derivatives}]
Let $\phi(\mu,\lambda) \in H^1(\Omega;\mathbb{C})$ denote the unique weak solution of
\begin{align}
\begin{cases}
\begin{alignedat}{2}
- \nabla \boldsymbol{\cdot} \left( \lambda \nabla \phi\right) 
  + \Big(\mu - \mathrm{i}\frac{\omega}{c}\Big) \phi
  &\;= 0, & &\quad \text{in } \Omega, \\
\frac{1}{4}(1-\rho)\phi+ \frac{1}{2}(1+\rho)\, \boldsymbol{\nu}\boldsymbol{\cdot}\lambda \nabla \phi
  &\;= Q_k,& & \quad \text{on } \partial \Omega,
\end{alignedat}
\end{cases}
\end{align}
for $\mu, \lambda \in L^\infty_+(\Omega; \mathbb{R})$. It is well-known (see, e.g.,~\cite{natterer2002frechet}) that the parameter-to-solution map 
\[
[\!L_+^\infty(\Omega; \mathbb{R})]^2 \ni (\mu, \lambda) \longmapsto \phi(\mu, \lambda) \in H^1(\Omega;\mathbb{C})
\] 
is Fr\'echet differentiable, and the derivative at $(\mu, \lambda) \in [L^\infty_+(\Omega; \mathbb{R})]^2$ in the direction $(\zeta, \vartheta) \in [L^\infty(\Omega; \mathbb{R})]^2$ is the unique $\phi' := D\phi(\mu, \lambda)[(\zeta, \vartheta)] \in H^1(\Omega;\mathbb{C})$ satisfying
\begin{align}
&\int_\Omega \left( \lambda \nabla \phi' \boldsymbol{\cdot} \nabla \overline{\psi} + \Big(\mu -\mathrm{i}\frac{\omega}{c}\Big)\phi' \overline{\psi} \right)\,\mathrm{d}x 
+ \frac{1-\rho}{2(1+\rho)} \int_{\partial \Omega} \phi' \overline{\psi} \,\mathrm{d}S \notag  \\
& \quad = - \int_\Omega \zeta \, \phi(\mu, \lambda) \overline{\psi} \, \mathrm{d}x
- \int_\Omega \vartheta \, \nabla \phi(\mu, \lambda) \boldsymbol{\cdot} \nabla \overline{\psi} \, \mathrm{d}x
\quad \forall \psi \in H^1(\Omega;\mathbb{C}).
\label{eq:forward_derivative_refined}
\end{align}

The trace theorem guarantees that the boundary measurement
\[
M(\mu, \lambda) = \frac{1-\rho}{2(1+\rho)} \int_{\partial \Omega} P_j \, \phi(\mu, \lambda) \, \mathrm{d}S
\] 
is well-defined. Since the mapping $H^1(\Omega; \mathbb{C}) \ni \phi \mapsto \int_{\partial \Omega} P_j \phi \, \mathrm{d}S$ is linear and continuous (by the trace theorem), it is straightforward to deduce that
\begin{align}
DM(\mu, \lambda)[(\zeta, \vartheta)] 
&= \frac{1-\rho}{2(1+\rho)} \int_{\partial \Omega} P_j \, \phi' \, \mathrm{d}S,
\label{eq:measurement_derivative_refined}
\end{align}
where $\phi'$ is the solution of \eqref{eq:forward_derivative_refined}.

Let $\phi^* = \phi^*(\mu, \lambda) \in H^1(\Omega;\mathbb{C})$ denote the unique solution of the weak adjoint problem
\begin{align}
&\int_\Omega \left(\lambda \nabla \phi^* \boldsymbol{\cdot} \nabla \overline{\psi} + \Big(\mu - \mathrm{i}\frac{\omega}{c}\Big) \phi^* \overline{\psi}\right) \, \mathrm{d}x
+ \frac{1-\rho}{2(1+\rho)} \int_{\partial \Omega} \phi^* \overline{\psi} \, \mathrm{d}S \notag \\
& \quad = \frac{2}{1+\rho} \int_{\partial \Omega} P_j \overline{\psi} \, \mathrm{d}S
\quad \forall \psi \in H^1(\Omega;\mathbb{C}).
\label{eq:weak_adjoint_refined}
\end{align}
Choosing $\psi = \overline{\phi'}$ in \eqref{eq:weak_adjoint_refined} and combining with \eqref{eq:forward_derivative_refined} and \eqref{eq:measurement_derivative_refined} yields
\[
DM(\mu, \lambda)[(\zeta, \vartheta)] 
= - \frac{1}{4}(1-\rho) \left( \int_\Omega \zeta \, \phi \, \phi^* \, \mathrm{d}x 
+ \int_\Omega \vartheta \, \nabla \phi \boldsymbol{\cdot} \nabla \phi^* \, \mathrm{d}x \right).
\]

Setting $\mu = \mu_a$ and $\lambda = \kappa(\mu_a, \mu_s)$ results in $\phi(\mu, \lambda) = \Phi_k(\mu_a, \mu_s)$, $M(\mu, \lambda) = \mathcal{M}_{jk}(\mu_a, \mu_s)$, and $\phi^*(\mu, \lambda) = \Phi_j^*(\mu_a, \mu_s)$. Since the Fr\'echet derivative of $\kappa$ at $(\mu_a, \mu_s) \in [L^\infty_+(\Omega; \mathbb{R})]^2$ in the direction $(\zeta, \vartheta)\in [L^\infty(\Omega; \mathbb{R})]^2$ is
\[
D\kappa(\mu_a, \mu_s)[(\zeta, \vartheta)] = -3 \, \kappa(\mu_a, \mu_s)^2 \, \zeta - 3 \, \kappa(\mu_a, \mu_s)^2 (1-g) \, \vartheta,
\]
application of the chain rule yields
\begin{align*}
D\mathcal{M}_{jk}(\mu_a, \mu_s)[(\zeta, \vartheta)] 
&= \frac{1}{4}(1-\rho) \int_\Omega \zeta \left( 3 \, \kappa^2 \, \nabla \Phi_k \boldsymbol{\cdot} \nabla \Phi_j^* - \Phi_k \Phi_j^* \right) \, \mathrm{d}x \\
&\quad + \frac{3}{4}(1-\rho) \int_\Omega \vartheta \, \kappa^2 (1-g) \, \nabla \Phi_k \boldsymbol{\cdot} \nabla \Phi_j^* \, \mathrm{d}x.
\end{align*}
\end{proof}

\bibliographystyle{unsrt}
\bibliography{References.bib}
\end{document}